\begin{document}

%\author{\begin{tabular}{c}
%Noga Alon\thanks{Schools of Mathematics and Computer Science,
%Tel-Aviv University, Tel-Aviv, 69978, Israel.
%Research supported in part by a
%grant from the Israel Science Foundation, and by the Hermann
%Minkowski Minerva Center for Geometry at Tel Aviv University.
%Email: noga@math.tau.ac.il.}
%\and
%Shai Gutner \thanks{School of Computer Science, Tel-Aviv University,
%Tel-Aviv, 69978, Israel.
%Email: gutner@tau.ac.il.}
%\end{tabular}}

%\date{}

%\renewcommand{\thefootnote}{\fnsymbol{footnote}}
%\addtocounter{footnote}{0}

%\title{
%{\bf Balanced Families of Perfect Hash Functions
%\\
%and Their Applications}
%\\
%(Extended Abstract)
%}
%\maketitle

\title{Balanced Families of Perfect Hash Functions \\
and Their Applications}

\author{Noga Alon \inst{1} \and Shai Gutner \inst{2}}
\institute{Schools of Mathematics and Computer Science, Tel-Aviv
University, Tel-Aviv, 69978, Israel.
\thanks{Research supported in part by a grant from the Israel
Science Foundation, and by the Hermann Minkowski Minerva Center
for Geometry at Tel Aviv University.} 
\email{noga@math.tau.ac.il.}
\and School of Computer Science, Tel-Aviv University, Tel-Aviv,
69978, Israel. 
\thanks{This paper forms part of a Ph.D. thesis
written by the author under the supervision of
Prof. N. Alon and Prof. Y. Azar in Tel Aviv University.}
\email{gutner@tau.ac.il.}}

\maketitle

\begin{abstract}
The construction of perfect hash functions is a well-studied topic.
In this paper, this concept is generalized with the following
definition. We say that a family of functions from $[n]$ to $[k]$ is
a $\delta$-balanced $(n,k)$-family of perfect hash functions if for
every $S \subseteq [n]$, $|S|=k$, the number of functions that are
1-1 on $S$ is between $T/\delta$ and $\delta T$ for some constant
$T>0$. The standard definition of a family of perfect hash functions
requires that there will be at least one function that is 1-1 on
$S$, for each $S$ of size $k$. In the new notion of balanced
families, we require the number of 1-1
functions to be almost the same (taking $\delta$ to be close to $1$)
for every such $S$. Our main result is that for any constant $\delta
> 1$, a $\delta$-balanced $(n,k)$-family of perfect hash functions
of size $2^{O(k \log \log k)} \log n$ can be constructed in time
$2^{O(k \log \log k)} n \log n$. Using the technique of color-coding
we can apply our explicit constructions to devise approximation
algorithms for various counting problems in graphs. In particular, we
exhibit a
deterministic polynomial time algorithm for approximating both the
number of simple paths of length $k$ and the number of simple cycles
of size $k$ for any $k \leq O(\frac{\log n}{\log \log \log n})$ in a
graph with $n$ vertices. The approximation is up to any fixed
desirable relative error. 
%This settles an open question of Arvind
%and Raman raised in \cite{conf/isaac/ArvindR02}.

\textbf{Key words:} approximate counting of subgraphs, color-coding, perfect hashing.

\end{abstract}

\section{Introduction}\label{sec:intro}

This paper deals with explicit constructions of balanced families
of perfect hash functions. The topic of perfect hash functions has
been widely studied under the more general framework of
$k$-restriction problems (see, e.g.,
\cite{Alon:2006:ACS},\cite{Koller:1994:CSS}). These problems have
an existential nature of requiring a set of conditions to hold at
least once for any choice of $k$ elements out of the problem
domain. We generalize the definition of perfect hash functions,
and introduce a new, simple, and yet useful notion which we call balanced
families of perfect hash functions. The purpose of our new
definition is to incorporate more structure into the
constructions. Our explicit constructions together with the method
of color-coding from \cite{JACM::AlonYZ1995} are applied for
problems of approximating the number of times that some fixed
subgraph appears within a large graph. We focus on counting simple
paths and simple cycles. Recently, the method of color-coding has
found interesting applications in computational biology
(\cite{SIKS2006},\cite{SI2006},\cite{journals/bmcbi/ShlomiSRS06},\cite{conf/apbc/HuffnerWZ07}),
specifically in detecting signaling pathways within
protein interaction. This problem is formalized using an
undirected edge-weighted graph, where the task is to find a
minimum weight path of length $k$. The application of our results
in this case is for approximating deterministically the number of
minimum weight paths of length $k$.

\textbf{Perfect Hash Functions.} An $(n,k)$-family of perfect hash
functions is a family of functions from $[n]$ to $[k]$ such that for
every $S \subseteq [n]$, $|S|=k$, there exists a function in the family that is
1-1 on $S$. There is an extensive literature dealing with explicit
constructions of perfect hash functions. The construction described
in \cite{JACM::AlonYZ1995} (following \cite{JACM::FredmanKS1984} and
\cite{SICOMP::SchmidtS1990}) is of size $2^{O(k)} \log n$. The best
known explicit construction is of size $e^k k^{O(\log k)} \log n$,
which closely matches the known lower bound of $\Omega(e^k \log n /
\sqrt{k})$ \cite{FOCS::NaorSS1995}.

\textbf{Finding and Counting Paths and Cycles.} The foundations for
the graph algorithms presented in this paper have been laid in
\cite{JACM::AlonYZ1995}. Two main randomized algorithms are
presented there, as follows. A simple directed or undirected path of
length $k-1$ in a graph $G=(V,E)$ that contains such a path can be
found in $2^{O(k)}|E|$ expected time in the directed case and in
$2^{O(k)}|V|$ expected time in the undirected case. A simple
directed or undirected cycle of size $k$ in a graph $G=(V,E)$ that
contains such a cycle can be found in either $2^{O(k)} |V||E|$ or
$2^{O(k)}|V|^\omega$ expected time, where $\omega < 2.376$ is the
exponent of matrix multiplication. The derandomization of these
algorithms incur an extra $\log |V|$ factor. As for the case of even
cycles, it is shown in \cite{Yuster:1997:FEC} that for every fixed
$k \geq 2$, there is an $O(|V|^2)$ algorithm for finding a simple
cycle of size $2k$ in an undirected graph. Improved algorithms for
detecting given length cycles have been presented in
\cite{ALGOR::AlonYZ1997} and \cite{conf/soda/YusterZ04}. An
interesting result from \cite{ALGOR::AlonYZ1997}, related to the
questions addressed in the present
paper, is an $O(|V|^\omega)$ algorithm for counting the number of
cycles of size at most $7$. Flum and Grohe proved that the problem of
counting 
\emph{exactly} the number of paths and cycles of length $k$ in both
directed and undirected graphs, parameterized by $k$, is
$\#W[1]$-complete \cite{Flum:2004:PCC}. Their result implies that most
likely there is no $f(k) \cdot n^c$-algorithm for counting the precise number 
of paths or
cycles of length $k$ in a graph of size $n$ for any computable
function $f: \bbbn \to \bbbn$ and constant $c$. This suggests the problem of
approximating these quantities.
Arvind and
Raman obtained a \emph{randomized} fixed-parameter tractable
algorithm to approximately count the number of copies of a fixed
subgraph with bounded treewidth within a large graph
\cite{conf/isaac/ArvindR02}. We settle in the affirmative the open
question they raise concerning the existence of a
\emph{deterministic} approximate counting algorithm for this
problem. For simplicity, we give algorithms for approximately
counting paths and cycles. These results can be easily extended to the
problem of approximately counting bounded treewidth subgraphs, combining the same
approach with the
method of \cite{JACM::AlonYZ1995}. The main new ingredient in our deterministic 
algorithms is the application of balanced families of perfect hash functions-
a combinatorial notion introduced here which, while simple, appears to be very useful.

\textbf{Balanced Families of Perfect Hash Functions.} We say that
a family of functions from $[n]$ to $[k]$ is a $\delta$-balanced
$(n,k)$-family of perfect hash functions if for every $S \subseteq
[n]$, $|S|=k$, the number of functions that are 1-1 on $S$ is
between $T/\delta$ and $\delta T$ for some constant $T>0$.
Balanced families of perfect hash functions are a natural
generalization of the usual concept of perfect hash functions. To
assist with our explicit constructions, we define also the even
more generalized notion of balanced splitters. (See section
\ref{sec:composing} for the definition. This is a generalization
of an ordinary splitter defined in \cite{FOCS::NaorSS1995}.) 
%All
%of our definitions had to be made precise in order to allow for
%easy composition of balanced splitters.

\textbf{Our Results.} The main focus of the paper is on explicit
constructions of balanced families of perfect hash functions and
their applications. First, we give non-constructive upper bounds
on the size of different types of balanced splitters. Then, we
compare these bounds with those achieved by constructive
algorithms. Our main result is an explicit construction, for every
$1 < \delta \leq 2$, of a $\delta$-balanced $(n,k)$-family of
perfect hash functions of size $2^{O(k \log \log
k)}(\delta-1)^{-O(\log k)} \log n$. The running time of the procedure that 
provides the construction 
is $2^{O(k \log \log k)}(\delta-1)^{-O(\log k)} n \log n +
(\delta-1)^{-O(k / \log k)}$.

Constructions of balanced families of perfect hash functions can
be applied to various counting problems in graphs. In particular,
we describe
deterministic algorithms that approximate the number of times that
a small subgraph appears within a large graph. The approximation
is always up to some multiplicative factor, that can be made
arbitrarily close
to $1$. For any $1 < \delta \leq 2$, the number of simple paths of
length $k-1$ in a graph $G=(V,E)$ can be approximated up to a
multiplicative factor of $\delta$ in time $2^{O(k \log \log
k)}(\delta-1)^{-O(\log k)} |E| \log |V| + (\delta-1)^{-O(k / \log
k)}$. The number of simple cycles of size $k$ can be approximated
up to a multiplicative factor of $\delta$ in time $2^{O(k \log
\log k)}(\delta-1)^{-O(\log k)} |E| |V| \log |V| +
(\delta-1)^{-O(k / \log k)}$.

\textbf{Techniques.} We use probabilistic arguments in order to
prove the existence of different types of small size balanced splitters (whose
precise definition is given in the next section).
To construct a balanced splitter, a natural randomized
algorithm is to choose a large enough number of independent random
functions. We show that in some cases, the method of conditional
probabilities, when applied on a proper choice of a potential
function, can derandomize this process in an efficient way.
Constructions of small probability spaces that admit $k$-wise
independent random variables are also a natural tool for achieving
good splitting properties. The use of error correcting codes is
shown to be useful when we want to find a family of functions from
$[n]$ to $[l]$, where $l$ is much bigger than $k^2$, such that for
every $S \subseteq [n]$, $|S|=k$, almost all of the functions
should be 1-1 on $S$. Balanced splitters can be composed in
different ways and our main construction is achieved by composing
three types of splitters. We apply the explicit constructions of
balanced families of perfect hash functions together with the
color-coding technique to get our approximate counting algorithms.

\section{Balanced Families of Perfect Hash Functions}\label{sec:composing}

In this section we formally define the new notions of balanced
families of perfect hash functions and balanced splitters. Here
are a few basics first. Denote by $[n]$ the set $\{1, \ldots
,n\}$. For any $k$, $1 \leq k \leq n$, the family of $k$-sized
subsets of $[n]$ is denoted by $\binom{[n]}{k}$. We denote by $k \
mod \ l$ the unique integer $0 \leq r < l$ for which $k = ql+r$,
for some integer $q$. We now introduce the new notion of balanced
families of perfect hash functions.

\begin{definition}
Suppose that $1 \leq k \leq n$ and $\delta \geq 1$. We say that a
family of functions from $[n]$ to $[k]$ is a $\delta$-balanced
$(n,k)$-family of perfect hash functions if there exists a
constant real number $T > 0$, such that for every $S \in
\binom{[n]}{k}$, the number of functions that are 1-1 on $S$,
which we denote by $inj(S)$, satisfies the relation $T / \delta
\leq inj(S) \leq \delta T$.
\end{definition}

The following definition generalizes both the last definition and
the definition of a splitter from \cite{FOCS::NaorSS1995}.

\begin{definition}
Suppose that $1 \leq k \leq n$ and $\delta \geq 1$, and let $H$ be
a family of functions from $[n]$ to $[l]$. For a set $S \in
\binom{[n]}{k}$ we denote by $split(S)$ the number of functions $h
\in H$ that split $S$ into equal-sized parts $h^{-1}(j) \bigcap
S$, $j=1, \ldots ,l$. In case $l$ does not divide $k$ we separate
between two cases. If $k \leq l$, then $split(S)$ is defined to be
the number of functions that are 1-1 on $S$. Otherwise, $k > l$
and we require the first $k \ mod \ l$ parts to be of size $\lceil
k/l \rceil$ and the remaining parts to be of size $\lfloor k/l
\rfloor$. We say that $H$ is a $\delta$-balanced
$(n,k,l)$-splitter if there exists a constant real number $T > 0$,
such that for every $S \in \binom{[n]}{k}$ we have $T / \delta
\leq split(S) \leq \delta T$.
\end{definition}

The definitions of balanced families of perfect hash functions and
balanced splitters given above enable us to state the following easy
composition lemmas.

\begin{lemma}\label{composition1}
For any $k < l$, let $H$ be an explicit $\delta$-balanced
$(n,k,l)$-splitter of size $N$ and let $G$ be an explicit
$\gamma$-balanced $(l,k)$-family of perfect hash functions of size
$M$. We can use $H$ and $G$ to get an explicit $\delta
\gamma$-balanced $(n,k)$-family of perfect hash functions of size
$N M$.
\end{lemma}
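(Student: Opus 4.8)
The plan is to compose the two families in the obvious way: form the family $F$ of all functions $[n]\to[k]$ of the form $g\circ h$ with $h\in H$ and $g\in G$. This clearly has size $NM$ and is explicit, since both $H$ and $G$ are. It remains to verify the $\delta\gamma$-balance property, i.e.\ to count, for a fixed $S\in\binom{[n]}{k}$, the number $inj_F(S)$ of pairs $(h,g)$ for which $g\circ h$ is $1$-$1$ on $S$. The key observation is that $g\circ h$ is injective on $S$ precisely when $h$ maps $S$ onto a set $S'=h(S)$ of size $k$ (equivalently, $h$ splits $S$ into parts that are as balanced as a partition of a $k$-set into $l$ parts can be — which since $k<l$ just means $h$ is $1$-$1$ on $S$) \emph{and} $g$ is $1$-$1$ on that image $S'$.

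First I would write
\[
inj_F(S)=\sum_{\substack{S'\in\binom{[l]}{k}}}\ \bigl|\{h\in H: h(S)=S'\}\bigr|\cdot inj_G(S'),
\]
where the sum is only over $k$-subsets $S'$ of $[l]$ because $h$ must be injective on $S$. Now apply the balance hypothesis on $G$: for every $S'\in\binom{[l]}{k}$ we have $T_G/\gamma\le inj_G(S')\le \gamma T_G$. Substituting the upper bound and then the lower bound, and then noting that $\sum_{S'} |\{h: h(S)=S'\}|$ is exactly $split_H(S)$ (the number of $h\in H$ that are $1$-$1$ on $S$, since $k<l$), we get
\[
\frac{T_G}{\gamma}\,split_H(S)\ \le\ inj_F(S)\ \le\ \gamma T_G\, split_H(S).
\]
Finally invoke the balance hypothesis on $H$: $T_H/\delta\le split_H(S)\le \delta T_H$. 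Chaining the two pairs of inequalities yields
\[
\frac{T_G T_H}{\delta\gamma}\ \le\ inj_F(S)\ \le\ \delta\gamma\, T_G T_H,
\]
so $F$ is a $\delta\gamma$-balanced $(n,k)$-family of perfect hash functions with constant $T=T_G T_H$, as claimed.

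The only point that needs care — and what I would regard as the main (minor) obstacle — is matching the combinatorial definitions precisely: one must check that the condition ``$g\circ h$ is $1$-$1$ on $S$'' really does decompose as ``$h$ contributes to $split_H(S)$'' together with ``$g$ contributes to $inj_G(h(S))$'', using that $k<l$ so the splitter condition for $H$ collapses to plain injectivity on $S$, and that the decomposition is a genuine partition of the counted pairs (no double counting, since $S'$ is determined by $h$). Once that bookkeeping is in place, the inequality chaining is entirely routine and the factors of $\delta$ and $\gamma$ multiply exactly as stated.
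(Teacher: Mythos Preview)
Your proposal is correct and takes exactly the same approach as the paper: the paper's proof is the single sentence ``We compose every function of $H$ with every function of $G$ and get the needed result,'' and your argument simply fills in the bookkeeping (the decomposition of $inj_F(S)$ and the chaining of the two balance inequalities) that the paper leaves implicit. There is nothing to add or correct.
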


\begin{proof}
We compose every function of $H$ with every function of $G$ and get
the needed result.
\qed
\end{proof}

\begin{lemma}\label{composition2}
For any $k > l$, let $H$ be an explicit $\delta$-balanced
$(n,k,l)$-splitter of size $N$. For every $j$, $j=1, \ldots ,l$,
let $G_j$ be an explicit $\gamma_j$-balanced $(n,k_j)$-family of
perfect hash functions of size $M_j$, where $k_j=\lceil k/l
\rceil$ for every $j \leq k \ mod \ l$ and $k_j=\lfloor k/l
\rfloor$ otherwise. We can use these constructions to get an
explicit $(\delta \prod_{j=1}^l \gamma_j)$-balanced $(n,k)$-family
of perfect hash functions of size $N \prod_{j=1}^l M_j$.
\end{lemma}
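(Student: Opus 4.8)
The plan is to mimic the proof of Lemma \ref{composition1}: build the new family by composing the $(n,k,l)$-splitter $H$ with a suitable tuple of perfect-hash functions, one drawn from each $G_j$. Concretely, for every $h \in H$ and every tuple $(g_1,\ldots,g_l) \in G_1 \times \cdots \times G_l$, I would define a function $f_{h,g_1,\ldots,g_l}\colon [n] \to [k]$ as follows. Fix in advance a partition of the target set $[k]$ into consecutive blocks $B_1,\ldots,B_l$ with $|B_j| = k_j$ (so the first $k \bmod l$ blocks have size $\lceil k/l\rceil$ and the rest have size $\lfloor k/l\rfloor$), together with fixed bijections $\phi_j\colon [k_j] \to B_j$. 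Given $x \in [n]$, let $j = h(x)$ and set $f_{h,g_1,\ldots,g_l}(x) = \phi_j(g_j(x))$. The resulting family has size $N \prod_{j=1}^l M_j$, as required.

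Next I would analyze $inj(S)$ for a fixed $S \in \binom{[n]}{k}$ in the composed family. The key observation is that $f_{h,g_1,\ldots,g_l}$ is $1$-$1$ on $S$ if and only if (i) $h$ splits $S$ with part sizes exactly $k_1,\ldots,k_l$ — i.e. $h$ is one of the $split(S)$ functions counted by the splitter — and (ii) for each $j$, the function $g_j$ is $1$-$1$ on the set $S_j^h := h^{-1}(j) \cap S$, which has size exactly $k_j$. Since the images $\phi_j(\cdot)$ land in the disjoint blocks $B_j$, injectivity on $S$ decomposes cleanly as injectivity on each block with no cross-block collisions, which is exactly condition (ii). Because the choices of $g_1,\ldots,g_l$ are independent across the product $G_1 \times \cdots \times G_l$, the number of good tuples for a given admissible $h$ is $\prod_{j=1}^l inj_{G_j}(S_j^h)$. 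Hence
\[
inj(S) = \sum_{h \text{ admissible for } S} \ \prod_{j=1}^l inj_{G_j}(S_j^h).
\]

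Finally I would plug in the balance guarantees. Each $G_j$ is $\gamma_j$-balanced with some constant $T_j$, so $T_j/\gamma_j \le inj_{G_j}(S_j^h) \le \gamma_j T_j$ for every $h$ and $j$; therefore each product $\prod_j inj_{G_j}(S_j^h)$ lies between $(\prod_j T_j)/\prod_j \gamma_j$ and $(\prod_j \gamma_j)\prod_j T_j$. Summing over the $split(S)$ admissible functions $h$, and using that $H$ is $\delta$-balanced with constant $T_H$ so that $T_H/\delta \le split(S) \le \delta T_H$, we get
\[
\frac{T_H \prod_{j=1}^l T_j}{\delta \prod_{j=1}^l \gamma_j} \ \le\ inj(S)\ \le\ \delta\Big(\prod_{j=1}^l \gamma_j\Big) T_H \prod_{j=1}^l T_j .
\]
Setting $T := T_H \prod_{j=1}^l T_j$ shows the composed family is $(\delta \prod_{j=1}^l \gamma_j)$-balanced, completing the proof. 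The only genuinely delicate point is bookkeeping the unequal block sizes and verifying that the splitter's definition in the $k>l$ case is precisely tailored so that the admissible $h$ produce parts of the exact sizes $k_j$ that the $G_j$ were built for; once that alignment is set up, the counting identity and the inequalities are routine.
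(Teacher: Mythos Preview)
Your construction is exactly the one in the paper: partition $[k]$ into consecutive intervals $I_j$ of sizes $k_j$, view each $G_j$ as mapping into $I_j$, and send $x\mapsto g_{h(x)}(x)$. The paper's proof stops after describing this construction, whereas you additionally spell out the counting identity $inj(S)=\sum_{h}\prod_j inj_{G_j}(S_j^h)$ and the two-sided balance bound; this extra verification is correct (in particular, your pigeonhole observation that $f$ can be injective on $S$ only when $|S_j^h|=k_j$ for every $j$ is the right reason condition~(i) is forced), so your proposal is a faithful and slightly more detailed version of the paper's argument.
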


\begin{proof}
We divide the set $[k]$ into $l$ disjoint intervals $I_1, \ldots
,I_l$, where the size of $I_j$ is $k_j$ for every $j=1, \ldots ,l$.
We think of $G_j$ as a family of functions from $[n]$ to $I_j$. For
every combination of $h \in H$ and $g_j \in G_j$, $j=1, \ldots ,l$,
we create a new function that maps an element $x \in [n]$ to
$g_{h(x)} (x)$.
\qed
\end{proof}

\section{Probabilistic Constructions}\label{sec:prob}

We will use the following two claims: a variant of the Chernoff
bound (c.f., e.g., \cite{MR1885388}) and Robbins' formula
\cite{MR0228020} (a tight version of Stirling's formula).

\begin{claim}\label{chernoff}
Let $Y$ be the sum of mutually independent indicator random
variables, $\mu = E[Y]$. For all $1 \leq \delta \leq 2$,
$$
Pr[ \frac{\mu}{\delta} \leq Y \leq \delta\mu ] > 1 - 2
e^{-(\delta-1)^2\mu/8}.
$$
\end{claim}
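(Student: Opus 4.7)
The plan is to reduce the claim to the two standard one-sided Chernoff tail bounds (exactly the variants cited in \cite{MR1885388}), handle the upper and lower tails separately after a suitable reparametrization, and then take a union bound. Throughout I would set $\epsilon = \delta - 1 \in [0,1]$, which is the convenient parameter for the classical bounds.

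For the upper tail I would apply the standard estimate
\[
\Pr[Y > (1+\epsilon)\mu] \leq e^{-\epsilon^2 \mu / 3},
\]
valid for $0 \leq \epsilon \leq 1$; since $1/3 > 1/8$, this already gives $\Pr[Y > \delta \mu] \leq e^{-(\delta-1)^2 \mu / 8}$, so the upper tail is not the tight side. For the lower tail I would first rewrite the event in the $(1-\epsilon')$ form: set $\epsilon' = \epsilon/(1+\epsilon)$, so that $\mu/\delta = \mu/(1+\epsilon) = (1-\epsilon')\mu$. Then the classical bound
\[
\Pr[Y < (1-\epsilon')\mu] \leq e^{-\epsilon'^2 \mu / 2}
\]
applies for $\epsilon' \in [0,1]$.

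The only quantitative step is to compare $\epsilon'$ with $\epsilon$. The hypothesis $\delta \leq 2$ is used precisely here: it gives $1+\epsilon \leq 2$ and hence $\epsilon' = \epsilon/(1+\epsilon) \geq \epsilon/2$. Substituting, $e^{-\epsilon'^2 \mu/2} \leq e^{-\epsilon^2 \mu/8} = e^{-(\delta-1)^2 \mu/8}$, matching the constant in the statement. A union bound over the two tail events then yields
\[
\Pr\!\left[\tfrac{\mu}{\delta} \leq Y \leq \delta \mu\right] > 1 - 2e^{-(\delta-1)^2 \mu/8},
\]
as claimed.

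There is no real obstacle: the whole content of the claim is the translation of the lower-tail event $\{Y < \mu/\delta\}$ into the $(1-\epsilon')$-form and the resulting loss of a factor $2$ from $\epsilon' \geq \epsilon/2$, which is what forces the denominator $8$ (rather than $2$ or $3$) and also explains the role of the assumption $\delta \leq 2$. The only step a careful reader might want to double-check is that the chosen version of Chernoff's bound cited from \cite{MR1885388} has the constants $1/2$ and $1/3$ I am using; if the reference states a slightly different (but equivalent up to constants) form, the argument is unchanged modulo replacing these constants and verifying that the worse of the two still dominates $1/8$.
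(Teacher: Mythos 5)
Your proof is correct. The paper itself gives no proof of this claim --- it is stated as a known variant of the Chernoff bound with a citation to \cite{MR1885388} --- and your derivation (one-sided bounds with exponents $\epsilon^2\mu/3$ and $\epsilon'^2\mu/2$, the reparametrization $\mu/\delta=(1-\epsilon')\mu$ with $\epsilon'=\epsilon/(1+\epsilon)\geq\epsilon/2$ using $\delta\leq 2$, and a union bound) is exactly the standard way to obtain the stated constant $1/8$; it correctly identifies why the hypothesis $\delta\leq 2$ is needed.
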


\begin{claim}\label{robbins}
For every integer $n \geq 1$,
$$
\sqrt{2 \pi} n^{n+1/2} e^{-n+1/(12n+1)} < n! < \sqrt{2 \pi}
n^{n+1/2} e^{-n+1/(12n)}.
$$
\end{claim}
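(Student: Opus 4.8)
The plan is to prove the equivalent assertion that the sequence $a_n := n!\, n^{-n-1/2}\, e^{n}$ satisfies $\sqrt{2\pi}\,e^{1/(12n+1)} < a_n < \sqrt{2\pi}\,e^{1/(12n)}$ for every $n\ge 1$; the two inequalities in the claim are exactly this statement after multiplying by $n^{n+1/2}e^{-n}$. I would follow the classical route: control the ratio of consecutive terms, telescope to obtain two-sided bounds involving an unknown limit, and then pin the limit down using the Wallis product. First I would compute $\ln(a_n/a_{n+1}) = (n+\tfrac12)\ln\frac{n+1}{n} - 1$, write $\frac{n+1}{n} = \frac{1+t}{1-t}$ with $t = \frac{1}{2n+1}$ (so that $n+\tfrac12 = \frac{1}{2t}$), and substitute the series $\ln\frac{1+t}{1-t} = 2\sum_{m\ge 0}\frac{t^{2m+1}}{2m+1}$; this collapses to the clean identity $\ln\frac{a_n}{a_{n+1}} = \sum_{m\ge 1}\frac{t^{2m}}{2m+1}$, which is in particular positive, so $(a_n)$ is decreasing.

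Next I would extract a two-sided estimate of this tail. For the upper side, bounding $\sum_{m\ge 1}\frac{t^{2m}}{2m+1} < \frac13\sum_{m\ge 1}t^{2m} = \frac{t^2}{3(1-t^2)}$ and simplifying with $t=\frac{1}{2n+1}$ gives exactly $\frac{1}{12n(n+1)} = \frac{1}{12n}-\frac{1}{12(n+1)}$, so that $b_n := \ln a_n - \frac{1}{12n}$ is increasing. For the lower side I need $\sum_{m\ge 1}\frac{t^{2m}}{2m+1} > \frac{1}{12n+1}-\frac{1}{12(n+1)+1}$, which I would obtain by comparing the tail against a geometric series with a slightly larger ratio (for instance using $\frac{1}{2m+1}\ge\frac13(3/5)^{m-1}$), producing a closed form that can be arranged to dominate the target for all $n\ge 1$; this makes $c_n := \ln a_n - \frac{1}{12n+1}$ decreasing.

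Since $b_n$ increases, $c_n$ decreases, $b_n<c_n$, and $c_n-b_n=\frac{1}{12n}-\frac{1}{12n+1}\to 0$, both sequences converge to a common limit $L$ with $b_n<L<c_n$ for every $n$; exponentiating yields $e^{L}e^{1/(12n+1)} < a_n < e^{L}e^{1/(12n)}$, i.e. the claim with $e^L$ in place of $\sqrt{2\pi}$. To finish I would substitute the asymptotics $n!\sim e^{L}n^{n+1/2}e^{-n}$ into the Wallis-product consequence $\frac{2^{2n}(n!)^2}{(2n)!\sqrt n}\to\sqrt{\pi}$; the left side tends to $e^{L}/\sqrt 2$, forcing $e^{L}=\sqrt{2\pi}$.

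The one genuinely delicate point is the lower estimate in the second step: hitting the denominator $12n+1$ on the nose (rather than some larger constant) forces a careful choice of comparison series, and a naive attempt easily ends up needing the first one or two values of $n$ checked by hand. Everything else is routine manipulation, and the evaluation of the constant relies on the Wallis product as an external ingredient.
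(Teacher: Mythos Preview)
Your argument is correct and is essentially Robbins' original derivation. Note, however, that the paper does not give its own proof of this claim: it is stated as a quoted fact (``Robbins' formula'') with a reference to Feller, so there is nothing to compare your approach against.

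On the one step you flagged as delicate: your proposed comparison $\frac{1}{2m+1}\ge \frac13(3/5)^{m-1}$ (with equality at $m=1,2$ and strict inequality thereafter) indeed gives
\[
\sum_{m\ge 1}\frac{t^{2m}}{2m+1} \;>\; \frac{t^2/3}{1-\tfrac{3}{5}t^2}
\;=\;\frac{5t^2}{15-9t^2}
\;=\;\frac{5}{60n^2+60n+6}
\qquad\bigl(t=\tfrac{1}{2n+1}\bigr),
\]
and the desired inequality $\frac{5}{60n^2+60n+6}>\frac{12}{(12n+1)(12n+13)}$ reduces, after cross-multiplying, to $120n>7$, which holds for every $n\ge 1$. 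So no separate hand-check of small $n$ is needed; the comparison series you chose is sharp enough to hit the $12n+1$ denominator directly.
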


Now we state the results for $\delta$-balanced $(n,k,l)$-splitters
of the three types: $k=l$, $k<l$ and $k>l$.

\begin{theorem}\label{prob1}
For any $1 < \delta \leq 2$, there exists a $\delta$-balanced
$(n,k)$-family of perfect hash functions of size $O(\frac{e^k
\sqrt{k} \log n}{(\delta - 1)^2})$.
\end{theorem}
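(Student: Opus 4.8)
The plan is to use the probabilistic method: fix a random family $H$ of $N$ independent uniformly random functions from $[n]$ to $[k]$, and show that with positive probability it is simultaneously $\delta$-balanced on all $\binom{[n]}{k}$ subsets $S$. For a single $S\in\binom{[n]}{k}$, a random function $h:[n]\to[k]$ is injective on $S$ with probability exactly $p=k!/k^k$, which by Claim~\ref{robbins} (Robbins' formula) is $\Theta(e^{-k}\sqrt{k}\,)$ — more precisely $p=\sqrt{2\pi k}\,e^{-k}\cdot e^{\theta}$ for some $\theta\in(1/(12k+1),1/(12k))$, so $p\ge c\,e^{-k}\sqrt{k}$ for an absolute constant $c$. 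Then $inj(S)$ is a sum of $N$ i.i.d.\ indicator variables with mean $\mu=Np$, which is the same for every $S$; this common mean is the natural choice for the constant $T$ in the definition.

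Next I would apply the Chernoff bound of Claim~\ref{chernoff}: for $1\le\delta\le2$,
$$
\Pr\!\left[\frac{\mu}{\delta}\le inj(S)\le\delta\mu\right]>1-2e^{-(\delta-1)^2\mu/8}.
$$
By a union bound over all $S$, the family fails to be $\delta$-balanced with $T=\mu$ with probability at most $2\binom{n}{k}e^{-(\delta-1)^2\mu/8}\le 2n^k e^{-(\delta-1)^2 Np/8}$. It suffices to choose $N$ so that this is less than $1$, i.e.\ so that $(\delta-1)^2 Np/8>k\ln n+\ln 2$. Solving, $N=\Theta\!\big(\frac{k\log n}{(\delta-1)^2 p}\big)=O\!\big(\frac{e^k\sqrt{k}\,\log n}{(\delta-1)^2}\big)$ suffices (absorbing the constant from the lower bound on $p$ and the additive $\ln 2$), which is exactly the claimed bound.

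There is essentially no hard obstacle here — the argument is a textbook probabilistic-method/union-bound computation — but the one point requiring a little care is getting the quantitative estimate $p=k!/k^k=\Theta(e^{-k}\sqrt k)$ cleanly from Robbins' formula so that the constants propagate correctly into the final $O(\cdot)$ bound; this is exactly why Claim~\ref{robbins} was stated. One should also note the degenerate case $k=1$ (or very small $k$) is trivial, and that the statement only asserts existence, so no derandomization is needed at this stage — the explicit constructions come later. I would close by remarking that the same template, with $split(S)$ in place of $inj(S)$ and the corresponding per-set probability, yields the analogous non-constructive bounds for $\delta$-balanced $(n,k,l)$-splitters in the cases $k<l$ and $k>l$.
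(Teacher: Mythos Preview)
Your proposal is correct and follows essentially the same approach as the paper: pick $M$ independent uniform functions, note that $inj(S)$ is a binomial with mean $pM$ where $p=k!/k^k$, apply the Chernoff bound of Claim~\ref{chernoff}, and union-bound over all $\binom{n}{k}$ sets, choosing $M=\Theta\!\big(\tfrac{k\ln n}{p(\delta-1)^2}\big)$ so that the failure probability drops below~$1$. Your explicit invocation of Robbins' formula to get $1/p=\Theta(e^k/\sqrt{k})$ is exactly the step that turns $M$ into the stated $O\!\big(\tfrac{e^k\sqrt{k}\log n}{(\delta-1)^2}\big)$ bound; the paper leaves this implicit in its sketch.
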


\begin{proof}
(sketch)
Set $p=k!/k^k$ and $M=\lceil\frac{8(k \ln n + 1 )}{p (\delta - 1
)^2}\rceil$. We choose $M$ independent random functions. For a
specific set $S \in \binom{[n]}{k}$, the expected number of
functions that are 1-1 on $S$ is exactly $pM$. By the Chernoff
bound, the probability that for at least one set $S \in
\binom{[n]}{k}$, the number of functions that are 1-1 on $S$ will
not be as needed is at most
$$
\binom{n}{k} 2 e^{-(\delta-1)^2 pM/8} \leq 2 \binom{n}{k} e^{-(k \ln
n + 1 )} < 1.
$$
\qed
\end{proof}

\begin{theorem}\label{prob2}
For any $k < l$ and $1 < \delta \leq 2$, there exists a
$\delta$-balanced $(n,k,l)$-splitter of size $O(\frac{e^{k^2/l} k
\log n}{(\delta - 1)^2})$.
\end{theorem}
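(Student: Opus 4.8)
The plan is to mimic the probabilistic argument of Theorem \ref{prob1}: choose $M$ independent random functions from $[n]$ to $[l]$, estimate for a fixed $S \in \binom{[n]}{k}$ the probability $p$ that a random function is 1-1 on $S$ (recall that when $k < l$, the splitting condition is exactly injectivity), set the expected count $pM$ large enough that the Chernoff bound of Claim \ref{chernoff} beats the union bound over all $\binom{n}{k}$ sets, and then read off the size $M$ from that threshold.

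First I would compute $p$. A uniformly random function is 1-1 on a fixed $k$-set iff its restriction to $S$ is an injection into $[l]$, so $p = l(l-1)\cdots(l-k+1)/l^k = \prod_{i=0}^{k-1}(1 - i/l)$. The key step is to show $p = e^{-\Theta(k^2/l)}$, or at least $p \geq e^{-O(k^2/l)}$, so that $1/p = e^{O(k^2/l)}$. Using $1 - x \geq e^{-x - x^2}$ (valid for $0 \le x \le 1/2$, and here $i/l < 1$), one gets $\ln p \geq -\sum_{i=0}^{k-1}(i/l + i^2/l^2) = -\Theta(k^2/l)$ once $k \le l$ — more precisely $-k^2/(2l) - O(k^3/l^2)$, and since $k < l$ the cubic term is $O(k^2/l)$ as well. (If one wants a cleaner bound, note $\sum i/l \le k^2/(2l)$ and crudely bound the rest.) This yields $p \geq e^{-ck^2/l}$ for an absolute constant $c$, hence $1/p \le e^{ck^2/l}$.

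Next I would set $M = \lceil \frac{8(k\ln n + 1)}{p(\delta-1)^2} \rceil$, exactly as in Theorem \ref{prob1}. For a fixed $S$, the number of sampled functions injective on $S$ is a sum of $M$ independent indicators with mean $\mu = pM \geq 8(k\ln n+1)/(\delta-1)^2$, so Claim \ref{chernoff} gives failure probability at most $2e^{-(\delta-1)^2 \mu/8} \le 2e^{-(k\ln n+1)}$. The union bound over all $S \in \binom{[n]}{k}$ contributes a factor $\binom{n}{k} \le n^k$, and $2\binom{n}{k}e^{-(k\ln n+1)} \le 2n^k e^{-k\ln n}e^{-1} = 2e^{-1} < 1$, so with positive probability a single choice of the $M$ functions works and forms a $\delta$-balanced $(n,k,l)$-splitter (with $T = pM$, since every $inj(S)$ lies in $[\mu/\delta, \delta\mu]$). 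Finally, $M = O\!\left(\frac{k\log n}{p(\delta-1)^2}\right) = O\!\left(\frac{e^{k^2/l}k\log n}{(\delta-1)^2}\right)$ by the bound on $1/p$, which is the claimed size.

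The only genuinely delicate point is the estimate $1/p \le e^{O(k^2/l)}$; everything else is a verbatim reprise of the proof of Theorem \ref{prob1}. I would watch the regime $k$ close to $l$, where the factor $1 - (k-1)/l$ can be as small as $1/l$ and the naive bound $1-x \ge e^{-x-x^2}$ needs $x \le 1/2$; there one can either split off the last few terms and bound them by $\ge 1/l$ each (contributing at most an $e^{O(\log l)}$ factor, absorbed since $\log l = O(k) = O(k^2/l)$ when $k \asymp l$), or simply observe that for $k \le l$ one always has $p \ge k!/l^k \cdot \binom{l}{k} \ge \dots$; in any case the bound $e^{-O(k^2/l)}$ survives.
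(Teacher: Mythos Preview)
Your plan coincides with the paper's: set $p=\frac{l!}{(l-k)!\,l^k}$, choose $M=\lceil 8(k\ln n+1)/(p(\delta-1)^2)\rceil$ independent random functions, and combine Claim~\ref{chernoff} with a union bound over $\binom{[n]}{k}$, exactly as in Theorem~\ref{prob1}. The only substantive step is the upper bound on $1/p$. The paper dispatches this in one stroke via Robbins' formula (Claim~\ref{robbins}):
\[
\frac{1}{p}\;\le\;e^{k+1/12}\Bigl(1-\frac{k}{l}\Bigr)^{l-k+1/2}\;\le\;e^{k+1/12}\,e^{-\frac{k}{l}(l-k+1/2)}\;=\;e^{(k^2-k/2)/l+1/12},
\]
which is valid uniformly for all $1\le k<l$.

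Your termwise bound via $1-x\ge e^{-x-x^2}$ is correct and gives the result cleanly when $k-1\le l/2$. However, your patch for $k$ close to $l$ is mis-stated: when $k$ is near $l$ there are not ``a few'' terms with $i/l>1/2$ but up to roughly $l/2$ of them, and bounding each from below by $1/l$ contributes a factor $l^{\Theta(l)}=e^{\Theta(l\log l)}$ to $1/p$, not $e^{O(\log l)}$ as you wrote. A correct repair is easy (for instance, for $k>l/2$ use the crude bound $p\ge l!/l^l\ge e^{-l}$ together with $l\le 4k^2/l$ in that range), but as written that case is not covered. The paper's Robbins-based estimate sidesteps the case distinction altogether.
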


\begin{proof}
(sketch)
We set $p=\frac{l!}{(l-k)! l^k}$ and $M=\lceil\frac{8(k \ln n + 1
)}{p (\delta - 1 )^2}\rceil$. Using Robbins' formula, we get
$$
\frac{1}{p} \leq e^{k+1/12} (1-\frac{k}{l})^{l-k+1/2} \leq
e^{k+1/12} e^{-\frac{k}{l}(l-k+1/2)} = e^{\frac{k^2-k/2}{l}+1/12}.
$$
We choose $M$ independent random functions and proceed as in the
proof of Theorem \ref{prob1}.
\qed
\end{proof}

For the case $k > l$, the probabilistic arguments from
\cite{FOCS::NaorSS1995} can be generalized to prove existence of
balanced $(n,k,l)$-splitters. Here we focus on the special case of
balanced $(n,k,2)$-splitters, which will be of interest later.

\begin{theorem}\label{prob3}
For any $k \geq 2$ and $1 < \delta \leq 2$, there exists a
$\delta$-balanced $(n,k,2)$-splitter of size $O(\frac{k \sqrt{k}
\log n}{(\delta - 1)^2})$.
\end{theorem}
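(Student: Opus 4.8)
The plan is to mimic the proofs of Theorems \ref{prob1} and \ref{prob2}: pick $M$ independent uniformly random functions from $[n]$ to $\{1,2\}$, show that for each fixed $S\in\binom{[n]}{k}$ the number $split(S)$ of functions that split $S$ evenly (into a part of size $\lceil k/2\rceil$ and a part of size $\lfloor k/2\rfloor$) is concentrated around its mean, and take a union bound over all $\binom{n}{k}$ sets. Concretely, I would set $p$ to be the probability that a single random function splits a fixed $k$-set correctly, so $\mathbb{E}[split(S)]=pM$ for every $S$ (note this is the same for all $S$, which is what makes the family balanced). For even $k=2m$ we have $p=\binom{2m}{m}2^{-2m}$, and for odd $k=2m+1$ we have $p=2\binom{2m+1}{m}2^{-(2m+1)}=\binom{2m+1}{m}2^{-2m}$ (the factor $2$ because either the first or the second part may be the larger one — one should check which convention the definition forces; with the stated convention the first $k\bmod l$ parts are the larger ones, so actually $p=\binom{2m+1}{m}2^{-(2m+1)}$, and this constant factor is harmless).

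The key quantitative step is to lower bound $p$ by $\Omega(1/\sqrt{k})$. This follows immediately from Robbins' formula (Claim \ref{robbins}): writing $\binom{2m}{m}=\frac{(2m)!}{(m!)^2}$ and plugging in the two-sided Stirling bounds gives $\binom{2m}{m}2^{-2m}=\Theta(1/\sqrt{m})=\Theta(1/\sqrt{k})$, and the odd case is identical up to constants. Hence $1/p = O(\sqrt{k})$. Now choose
$$
M=\left\lceil \frac{8(k\ln n+1)}{p(\delta-1)^2}\right\rceil = O\!\left(\frac{\sqrt{k}\,(k\ln n+1)}{(\delta-1)^2}\right)=O\!\left(\frac{k\sqrt{k}\log n}{(\delta-1)^2}\right),
$$
which is exactly the claimed bound on the size of the splitter.

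Finally I would verify the probabilistic guarantee. For a fixed $S$, $split(S)$ is a sum of $M$ mutually independent indicator variables with mean $\mu=pM$, so by the Chernoff-type bound in Claim \ref{chernoff},
$$
\Pr\!\left[\frac{\mu}{\delta}\le split(S)\le \delta\mu\right] > 1-2e^{-(\delta-1)^2 pM/8}.
$$
By the choice of $M$ we have $(\delta-1)^2 pM/8 \ge k\ln n+1$, so the failure probability for a single $S$ is at most $2e^{-(k\ln n+1)}$. A union bound over the at most $\binom{n}{k}\le n^k$ sets $S$ gives total failure probability at most $n^k\cdot 2e^{-(k\ln n+1)} = 2e^{-1}<1$. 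Therefore some choice of $M$ functions works, and with $T=pM$ this family is a $\delta$-balanced $(n,k,2)$-splitter of the desired size. The only mildly delicate point — the main thing to get right rather than a genuine obstacle — is keeping track of which part is the larger one in the odd case so that the single-function success probability $p$ is the same for every $S$; once that is pinned down, the argument is a direct adaptation of Theorem \ref{prob1}.
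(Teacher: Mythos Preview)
Your proposal is correct and follows essentially the same approach as the paper's proof: set $M=\lceil 8(k\ln n+1)/(p(\delta-1)^2)\rceil$, use Robbins' formula to show $1/p=O(\sqrt{k})$, and then apply Chernoff plus a union bound exactly as in Theorem~\ref{prob1}. Your treatment is in fact more detailed than the paper's sketch, and your aside about pinning down which part is the larger one in the odd case (the definition indeed fixes the first part to have size $\lceil k/2\rceil$) is the right sanity check but, as you note, immaterial to the $O(\sqrt{k})$ bound.
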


\begin{proof}
(sketch)
Set $M=\lceil\frac{8(k \ln n + 1 )}{p (\delta - 1 )^2}\rceil$,
where $p$ denotes the probability to get the needed split in a
random function. If follows easily from Robbins' formula that
$1/p=O(\sqrt{k})$. We choose $M$ independent random functions and
proceed as in the proof of Theorem \ref{prob1}.
\qed
\end{proof}

\section{Explicit Constructions}\label{sec:explicit}

In this paper, we use the term explicit construction for an
algorithm that lists all the elements of the required family of
functions in time which is polynomial in the total size of the
functions. For a discussion on other definitions for this term,
the reader is referred to \cite{FOCS::NaorSS1995}. We state our
results for $\delta$-balanced $(n,k,l)$-splitters of the three
types: $k=l$, $k<l$ and $k>l$.

\begin{theorem}\label{explicit1}
For any $1 < \delta \leq 2$, a $\delta$-balanced $(n,k)$-family of
perfect hash functions of size $O(\frac{e^k \sqrt{k} \log
n}{(\delta - 1)^2})$ can be constructed deterministically within time
$\binom{n}{k}\frac{e^k k^{O(1)} n \log n}{(\delta - 1)^2}$.
\end{theorem}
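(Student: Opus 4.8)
The plan is to derandomize the probabilistic construction of Theorem \ref{prob1} via the method of conditional probabilities, keeping track of a single real-valued potential function whose expectation is initially less than $1$ and whose value never increases as we fix the functions one by one. As in the proof of Theorem \ref{prob1}, set $p = k!/k^k$ and $M = \lceil 8(k\ln n + 1)/(p(\delta-1)^2)\rceil$; note $1/p = O(e^k\sqrt{k})$ by Robbins' formula (Claim \ref{robbins}), so $M = O(e^k\sqrt{k}\log n/(\delta-1)^2)$ as required. We build the family $h_1, \ldots, h_M : [n] \to [k]$ greedily: having chosen $h_1, \ldots, h_{i-1}$, we choose $h_i$ to minimize the conditional expectation of the potential.

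The potential is $\Phi = \sum_{S \in \binom{[n]}{k}} \bigl( \Pr[\,inj_M(S) < pM/\delta\,] + \Pr[\,inj_M(S) > \delta pM\,] \bigr)$, where $inj_M(S)$ counts how many of the $M$ functions are $1$-$1$ on $S$, the functions chosen so far are fixed, and the remaining ones are uniformly random. By Claim \ref{chernoff} the initial value is $\Phi_0 \le \binom{n}{k} \cdot 2e^{-(\delta-1)^2 pM/8} < 1$. Each of the two tail probabilities for a fixed $S$, conditioned on a partial assignment, is a binomial-type tail that is a sum over the at-most-two outcomes of $h_i$ on $S$ (either $h_i$ is $1$-$1$ on $S$, probability $p$, or not), composed with binomial tails for the still-random functions; hence each term is exactly computable in time $k^{O(1)}$ once we know, for the current $S$, how many of $h_1,\dots,h_{i-1}$ were injective on it. So the conditional expectation of $\Phi$ after tentatively setting $h_i$ to a given function is computable, and by averaging there is always a choice of $h_i$ that does not increase $\Phi$; since $\Phi_0 < 1$ and $\Phi$ is a sum of probabilities which at the end are each $0$ or $1$, the final family satisfies $T/\delta \le inj(S) \le \delta T$ for all $S$ with $T = pM$.

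For the running time: choosing each $h_i$ means trying, for each of the $n$ arguments, each of its $k$ possible images — that is $O(nk)$ candidate "coordinate decisions," or if one fixes $h_i$ all at once by a second inner round of conditional probabilities over the $n$ coordinates, $k$ choices per coordinate — and for each tentative decision re-evaluating $\Phi$, which is a sum over all $\binom{n}{k}$ sets $S$ of a $k^{O(1)}$-time term. This gives $O(M \cdot nk \cdot \binom{n}{k} k^{O(1)}) = \binom{n}{k} e^k k^{O(1)} n \log n /(\delta-1)^2$, matching the claimed bound (the $M$ factor absorbs one $\log n$ and the $1/(\delta-1)^2$, and the $nk$ factor contributes the extra $n$). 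The main obstacle — and the only place needing care — is verifying that the conditional tail probabilities are genuinely exactly computable in time $k^{O(1)}$ per set $S$: this works because conditioned on a partial assignment, $inj_M(S)$ is $c + \mathrm{Bin}(M - (i-1), p)$ for a known constant $c$, plus at most one further Bernoulli$(p)$ term for $h_i$, so each tail is a sum of at most $M+1 = \mathrm{poly}$ explicitly-evaluable binomial probabilities; handling this exact arithmetic (or a sufficiently accurate rational approximation that preserves the strict inequality $\Phi_0 < 1$) is the technical heart of the argument.
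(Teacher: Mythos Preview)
Your high-level strategy---derandomize the probabilistic construction of Theorem~\ref{prob1} by the method of conditional probabilities, fixing the $M$ functions coordinate by coordinate---matches the paper's. The substantive difference is the choice of potential. The paper does \emph{not} track the raw tail probabilities $\Pr[inj_M(S)\notin[pM/\delta,\delta pM]]$; instead it uses the exponential (moment-generating-function) potential
\[
\Phi=\sum_{S}\bigl(e^{\lambda(X_S-pM)}+e^{\lambda(pM-X_S)}\bigr),\qquad \lambda=\tfrac{\delta-1}{4},
\]
bounds $E[\Phi]\le e^{2(k\ln n+1)}$, and at the end reads off $|X_S-pM|\le 2(k\ln n+1)/\lambda$ from $\Phi\le e^{2(k\ln n+1)}$. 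The reason for this choice is exactly the point on which your write-up is inconsistent: the MGF potential \emph{factorizes} over the $M$ independent functions, so after fixing one more coordinate the conditional expectation for each $S$ changes by a single multiplicative factor and can be updated in $k^{O(1)}$ arithmetic operations. Your raw-tail potential does not factorize; computing $\Pr[c+\mathrm{Bin}(M-i,p)>\delta pM]$ from scratch costs $\Theta(M)$ operations, which would insert an extra $M=\Theta(e^k\sqrt{k}\log n/(\delta-1)^2)$ factor into the running time and contradict the bound you claim. You can rescue your approach by precomputing an $O(M)\times O(M)$ table of binomial cumulative probabilities (this $O(M^2)$ cost is dominated by the main term), but you should say so explicitly rather than asserting $k^{O(1)}$ per set and then, in the last sentence, conceding that each tail is a sum of $M+1$ terms. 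The paper's exponential potential is precisely the standard pessimistic estimator that avoids this bookkeeping and the attendant precision issues you flag.
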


\begin{proof}
We set $p=k!/k^k$ and $M=\lceil\frac{16(k \ln n + 1 )}{p (\delta -
1 )^2 }\rceil$. Denote $\lambda = (\delta - 1)/4$, so obviously $0
< \lambda \leq 1/4$. Consider a choice of $M$ independent random
functions from $[n]$ to $[k]$. This choice will be derandomized in
the course of the algorithm. For every $S \in \binom{[n]}{k}$, we
define $X_S=\sum_{i=1}^M X_{S,i}$, where $X_{S,i}$ is the
indicator random variable that is equal to $1$ iff the $i$th
function is 1-1 on $S$. Consider the following potential function:
$$
\Phi = \sum_{S \in \binom{[n]}{k}} e^{\lambda(X_S-pM)}+e^{\lambda(pM
- X_S)}.
$$
Its expectation can be calculated as follows:
$$
E[\Phi] = \binom{n}{k} ( e^{-\lambda pM} \prod_{i=1}^ME[e^{\lambda
X_{S,i}}] + e^{\lambda pM} \prod_{i=1}^ME[e^{-\lambda X_{S,i}}]) =
$$
$$
=\binom{n}{k} ( e^{-\lambda pM} [p e^{\lambda} + (1-p) ]^M +
e^{\lambda pM} [p e^{-\lambda} + (1-p) ]^M).
$$

We now give an upper bound for $E[\Phi]$. Since $1+u \leq e^u$ for
all $u$ and $e^{-u} \leq 1-u+u^2/2$ for all $u \geq 0$, we get
that $p e^{-\lambda} + (1-p) \leq e^{p(e^{-\lambda}-1)} \leq
e^{p(-\lambda+\lambda^2/2)}$. Define $\epsilon = e^{\lambda}-1$,
that is $\lambda=\ln(1+\epsilon)$. Thus $p e^{\lambda} + (1-p) =
1+\epsilon p \leq e^{\epsilon p}$. This implies that
$$
E[\Phi] \leq n^k ( (\frac{e^{\epsilon}}{1+\epsilon})^{pM} +
e^{\lambda^2 pM/2} ).
$$
Since $e^u \leq 1+u+u^2$ for all $0 \leq u \leq 1$, we have that
$\frac{e^{\epsilon}}{1+\epsilon} = e^{e^{\lambda}-1-\lambda} \leq
e^{\lambda^2}$. We conclude that
$$
E[\Phi] \leq 2 n^k e^{\lambda^2 pM} \leq e^{2(k \ln n + 1)}.
$$

We now describe a deterministic algorithm for finding $M$
functions, so that $E[\Phi]$ will still obey the last upper bound.
This is performed using the method of conditional probabilities
(c.f., e.g., \cite{MR1885388}, chapter 15). The algorithm will
have $M$ phases, where each phase will consist of $n$ steps. In
step $i$ of phase $j$ the algorithm will determine the $i$th value
of the $j$th function. Out of the $k$ possible values, we greedily
choose the value that will decrease $E[\Phi]$ as much as possible.
We note that at any specific step of the algorithm, the exact
value of the conditional expectation of the potential function can
be easily computed in time $\binom{n}{k} k^{O(1)}$.

After all the $M$ functions have been determined, every set $S \in
\binom{[n]}{k}$ satisfies the following:
$$
e^{\lambda(X_S-pM)}+e^{\lambda(pM - X_S)} \leq e^{2(k \ln n + 1)}.
$$
This implies that
$$
-2(k \ln n + 1 )\leq \lambda(X_S-pM) \leq 2(k \ln n + 1).
$$
Recall that $\lambda = (\delta - 1)/4$, and therefore
$$
(1-\frac{8(k \ln n + 1)}{(\delta - 1)pM}) pM \leq X_S \leq
(1+\frac{8(k \ln n + 1)}{(\delta - 1)pM}) pM.
$$
Plugging in the values of $M$ and $p$ we get that
$$
(1-\frac{\delta-1}{2})pM \leq X_S \leq (1+\frac{\delta-1}{2})pM.
$$
Using the fact that $1/u \leq 1-(u-1)/2$ for all $1 \leq u \leq 2$,
we get the desired result
$$
pM/\delta \leq X_S \leq \delta pM.
$$
\qed
\end{proof}

\begin{theorem}\label{explicit2}
For any $1 < \delta \leq 2$, a $\delta$-balanced $(n,k,\lceil
\frac{2 k^2}{\delta-1} \rceil)$-splitter of size $\frac{k^{O(1)}
\log n}{(\delta - 1)^{O(1)}}$ can be constructed in time
$\frac{k^{O(1)} n \log n}{(\delta - 1)^{O(1)}}$.
\end{theorem}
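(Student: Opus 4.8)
The plan is to construct a $\delta$-balanced $(n,k,l)$-splitter with $l = \lceil 2k^2/(\delta-1)\rceil$ by means of an error-correcting code, exploiting the fact that when $l \gg k^2$ a random function is $1$-$1$ on any fixed $k$-set with probability very close to $1$, so almost all functions in a good family split every $k$-set correctly (here, since $k < l$, "split" just means injective). Concretely, I would take a linear (or algebraic-geometry / Reed--Solomon) code over an alphabet of size $l$ with block length $n$, relative minimum distance $1 - O(k^2/l) \geq 1 - (\delta-1)/2$, and only $O(\log n)$ codewords up to the dependence on $k$ and $\delta$; each coordinate position $j \in [n]$ together with a codeword $c$ gives a function $h_c(j) = c_j \in [l]$. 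The standard observation is that for a fixed $S \in \binom{[n]}{k}$, the number of codewords $c$ for which $h_c$ is \emph{not} injective on $S$ is at most $\binom{k}{2}$ times the number of codewords agreeing in two prescribed positions, and a large-distance code forces that to be a tiny fraction of the family.

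First I would fix the target: we want every $S \in \binom{[n]}{k}$ to have $inj(S)$ within a factor $\delta$ of some common $T$. Take $N$ to be the size of the family. If every $S$ fails on at most $\varepsilon N$ functions, then $(1-\varepsilon)N \le inj(S) \le N$ for all $S$, and choosing $\varepsilon \le (\delta-1)/(2\delta)$ (say) makes $N/(1-\varepsilon) \le \delta N /(\text{something})$ work out; more carefully, with $T = N$ we need $N/\delta \le inj(S) \le \delta N$, and $inj(S) \ge (1-\varepsilon)N \ge N/\delta$ holds once $1-\varepsilon \ge 1/\delta$, i.e. $\varepsilon \le 1 - 1/\delta = (\delta-1)/\delta$, which is comfortably implied by $\varepsilon = O(k^2/l) = O((\delta-1)/k^{?})$ for our choice of $l$. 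Second, I would pin down the code: use a Reed--Solomon code over $\mathbb{F}_q$ with $q = \Theta(k^2/(\delta-1))$ a prime power (rounding $l$ up appropriately), degree-$d$ polynomials with $d = \Theta(\log n / \log q)$, giving $q^{d+1} = (\log n)^{O(1)} \cdot (\text{poly in } k, 1/(\delta-1))$ codewords and pairwise agreement in at most $d$ positions. Third, a union bound over the $\binom{k}{2}$ pairs in $S$: the number of functions non-injective on $S$ is at most $\binom{k}{2} \cdot (\text{number of codeword pairs agreeing in a fixed coordinate pair})$; for RS this is at most $\binom{k}{2} q^{d-1} = \binom{k}{2} N / q^2$, so the failure fraction is $\varepsilon \le \binom{k}{2}/q^2 = O((\delta-1)^2/k^2) \cdot k^2 = O((\delta-1)/1)$ — here I need to size $q$ so that $\binom{k}{2}/q^2 \le (\delta-1)/(2\delta)$, i.e. $q = \Omega(k/\sqrt{\delta-1})$, which is weaker than $l = \lceil 2k^2/(\delta-1)\rceil$, so there is slack; I would simply take $q$ to be the least prime power $\ge l$. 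Finally, the construction time: listing all $q^{d+1}$ codewords and evaluating each at the $n$ points costs $q^{d+1} \cdot n \cdot q^{O(1)} = \frac{k^{O(1)} n \log n}{(\delta-1)^{O(1)}}$, as claimed.

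The main obstacle I anticipate is \emph{not} the distance-to-injectivity argument (that is classical, and essentially the content of the "error correcting codes" remark in the Techniques section), but rather making the parameter $l = \lceil 2k^2/(\delta-1)\rceil$ come out \emph{exactly}, as opposed to merely $\Theta(k^2/(\delta-1))$. Prime powers are dense enough that rounding $q$ up to $l$ or slightly beyond changes everything only by constants absorbed into the $O(1)$ exponents, but one must check that the resulting family is a splitter for $l$ \emph{equal to} the stated value — equivalently, one can always pad the alphabet from $q$ up to $l$ without hurting injectivity (extra alphabet symbols are simply never used), so a splitter over $[q]$ with $q \ge l$ restricts trivially, and a splitter over a smaller alphabet can be composed/embedded into $[l]$. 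The second thing to be careful about is that the definition of $split(S)$ for $k < l$ reduces to $inj(S)$, so we genuinely only need the agreement bound and no equal-part-size bookkeeping; I would note this explicitly at the start of the proof. Beyond that, the argument is a direct assembly: fix the RS code, union-bound over pairs, convert the failure fraction into the $\delta$-balance inequality via $1 - 1/\delta \le 1$, and tally the running time.
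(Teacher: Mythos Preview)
Your high-level strategy---use an error-correcting code over alphabet $[l]$, bound the number of non-injective functions on a $k$-set by a union bound over the $\binom{k}{2}$ pairs, and convert a failure fraction of at most $(\delta-1)/2$ into the inequality $inj(S)\ge M/\delta$---is exactly what the paper does. The paper's proof is a three-line instantiation of this: take an explicit code with $n$ codewords over $[q]$, $q=\lceil 2k^2/(\delta-1)\rceil$, of relative distance $\ge 1-2/q$ and length $M=O(q^2\log n)$ (the Alon--Bruck--Naor--Naor--Roth construction), and let the $i$th function send $x\in[n]$ to the $i$th symbol of $x$'s codeword.

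The gap in your execution is an orientation error. You take the \emph{codewords} to be the functions (block length $n$) and then invoke minimum distance; but minimum distance bounds agreement between two \emph{codewords}, not between two \emph{positions}. What you need is that for every pair $j_1,j_2\in[n]$ few codewords $c$ have $c_{j_1}=c_{j_2}$, which is a statement about columns of the code matrix and is not implied by large minimum distance of the row code. The paper avoids this by transposing: it makes the $n$ elements the codewords (so the code has $n$ codewords, length $M$), and then relative distance $\ge 1-2/q$ literally says any two elements collide under at most a $2/q$ fraction of the $M$ functions. Your concrete Reed--Solomon choice also breaks: RS over $\mathbb{F}_q$ has block length at most $q$, so with $q=\Theta(k^2/(\delta-1))\ll n$ you cannot take block length $n$; and even granting the construction, your size $q^{d+1}$ with $d=\Theta(\log n/\log q)$ is $n^{\Theta(1)}$, not $(\log n)^{O(1)}$ as you claim (and the per-pair count should be $q^{d}=N/q$, not $q^{d-1}$). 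Fix the orientation---encode each $x\in[n]$ as a codeword and let coordinates index the functions---and then cite a code family that actually attains $n$ codewords with relative distance $1-2/q$ at length $O(q^2\log n)$; that is precisely the ABNNR code the paper uses.
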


\begin{proof}
Denote $q=\lceil \frac{2 k^2}{\delta-1} \rceil$. Consider an
explicit construction of an error correcting code with $n$ codewords
over alphabet $[q]$ whose normalized Hamming distance is at least $1
- \frac{2}{q}$. Such explicit codes of length $O(q^2 \log n)$ exist
\cite{journals/tit/AlonBNNR92}. Now let every index of the code
corresponds to a function from $[n]$ to $[q]$. If we denote by $M$
the length of the code, which is in fact the size of the splitter,
then for every $S \in \binom{[n]}{k}$, the number of good splits is
at least
$$
(1-\binom{k}{2} \frac{2}{q}) M \geq (1-\frac{\delta-1}{2}) M \geq M
/\delta,
$$
where the last inequality follows from the fact that $1-(u-1)/2
\geq 1/u$ for all $1 \leq u \leq 2$.
\qed
\end{proof}

For our next construction we use small probability spaces that
support a sequence of almost $k$-size independent random variables.
A sequence $X_1, \ldots ,X_n$ of random Boolean variables is
$(\epsilon,k)$-independent if for any $k$ positions $i_1 < \cdots <
i_k$ and any $k$ bits $\alpha_1, \ldots ,\alpha_k$ we have
$$
|Pr[X_{i_1}=\alpha_1, \ldots ,X_{i_k}=\alpha_k]-2^{-k}| < \epsilon.
$$
It is known
(\cite{SICOMP::NaorN1993},\cite{journals/rsa/AlonGHP92},\cite{journals/tit/AlonBNNR92})
that sample spaces of size $2^{O(k+\log \frac{1}{\epsilon})} \log
n$ that support $n$ random variables that are
$(\epsilon,k)$-independent can be constructed in time $2^{O(k+\log
\frac{1}{\epsilon})} n \log n$.

\begin{theorem}\label{explicit3}
For any $k \geq l$ and $1 < \delta \leq 2$, a $\delta$-balanced
$(n,k,l)$-splitter of size $2^{O(k \log l - \log(\delta-1))} \log
n$ can be constructed in time $2^{O(k \log l - \log(\delta-1))} n
\log n$.
\end{theorem}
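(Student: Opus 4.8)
The plan is to build the splitter by composing a sequence of $(\epsilon,k)$-independent sample spaces, one layer per "output coordinate," so that in the end each element of $[n]$ is assigned a value in $[l]$ by reading off $\lceil \log_2 l\rceil$ almost-independent bits. First I would fix $t=\lceil \log_2 l\rceil$ and, for a set $S\in\binom{[n]}{k}$, note that the distribution of $(h(x))_{x\in S}$ under a truly uniform $h\colon[n]\to[2^t]$ assigns a fixed probability $p_0>0$ to the desired split pattern (the first $k\bmod l$ parts of size $\lceil k/l\rceil$, the rest of size $\lfloor k/l\rfloor$, with $l$-ary buckets merged into $2^t$-ary ones in a fixed way). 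The value $p_0$ depends only on $k$ and $l$ and satisfies $p_0 \ge 2^{-O(k\log l)}$. I would then take $\epsilon$ small enough — concretely $\epsilon = (\delta-1)p_0/(C\,2^{k})$ for a suitable constant $C$, so that $\log(1/\epsilon)=O(k\log l - \log(\delta-1))$ — and use the quoted construction of an $(\epsilon, kt)$-independent sample space of size $2^{O(kt+\log(1/\epsilon))}\log n = 2^{O(k\log l-\log(\delta-1))}\log n$ on $nt$ variables, which we read as $t$ bits for each of the $n$ elements. Each sample point yields one function $[n]\to[2^t]$, and we discard (or rather, leave unused) the output symbols in $\{l+1,\dots,2^t\}$ by the standard trick of only ever splitting $S$ into the first $l$ buckets; since $|S|=k$ and only $l$ buckets are "allowed," this costs nothing.

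Next I would do the counting. Fix $S$. For each of the $M$ functions in the family, the event "this function splits $S$ correctly" depends on only the $kt \le k\log_2 l + k$ bits indexed by the $t$ coordinates of the $k$ elements of $S$. By $(\epsilon,kt)$-independence, the probability of each of the (at most $2^{kt}$) atomic patterns deviates from its uniform value by less than $\epsilon$, so the probability that a single function gives a good split lies in $[p_0 - 2^{kt}\epsilon,\ p_0 + 2^{kt}\epsilon]$. Actually one must be slightly careful: a "good split" is a union of several atomic patterns (the assignment of which specific element goes to which bucket is immaterial), so the number of atoms summed is $\binom{k}{\text{multiplicities}} \le 2^k \cdot(\text{something})$ — in any case at most $2^{kt}$ — and the error is bounded by (number of atoms)$\cdot\epsilon$. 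Choosing the constant $C$ in the definition of $\epsilon$ large enough forces this per-function probability into $[p_0/\sqrt\delta,\ p_0\sqrt\delta]$, hence (summing over the $M$ functions, or rather: since the quantity $\mathrm{split}(S)$ is exactly $M$ times this probability when we average over the whole sample space — here I use that the family \emph{is} the whole sample space, so $\mathrm{split}(S)=M\cdot\Pr[\text{good}]$) we get $T/\delta \le \mathrm{split}(S)\le \delta T$ with $T = M p_0$. This is uniform over $S$ because the bound on $\Pr[\text{good}]$ was uniform.

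Finally I would assemble the size and time bounds: the construction has $M = 2^{O(k\log l-\log(\delta-1))}\log n$ functions, matching the claim, and listing them costs $2^{O(k\log l-\log(\delta-1))} n\log n$ by the quoted running time for almost-independent sample spaces. The step I expect to be the main obstacle is pinning down $p_0$ and showing $\log(1/p_0)=O(k\log l)$ cleanly in the regime $k\ge l$ with $l\nmid k$: one needs Robbins' formula (Claim \ref{robbins}) to estimate the multinomial coefficient $\binom{k}{\lceil k/l\rceil,\dots,\lfloor k/l\rfloor}\big/ l^{k}$ and to check that it is not doubly-exponentially small, so that $\epsilon$ can still be taken with $\log(1/\epsilon)=O(k\log l-\log(\delta-1))$. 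The rest — the union-of-atoms accounting and the arithmetic converting additive error into the multiplicative factor $\delta$ via the elementary inequality $1/u \le 1-(u-1)/2$ for $1\le u\le 2$ already used in Theorems \ref{explicit1} and \ref{explicit2} — is routine.
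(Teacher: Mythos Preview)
Your approach is essentially the paper's: encode each element of $[n]$ by $t=\lceil\log_2 l\rceil$ bits drawn from an $(\epsilon,kt)$-independent sample space, so that the distribution on any $k$ elements is additively $\epsilon$-close (per atom) to uniform, and then turn the additive error into the multiplicative factor $\delta$ via $1/u\le 1-(u-1)/2$.

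Two remarks. First, the detour you flag as the ``main obstacle''---bounding $p_0$ from below via Robbins---is not needed. The good-split event is a union of \emph{exactly} $p_0\cdot 2^{kt}$ atoms (each atom has weight $2^{-kt}$ under the uniform measure), so the additive error is at most $p_0\cdot 2^{kt}\cdot\epsilon$. Taking $\epsilon=2^{-kt-1}(\delta-1)$, as the paper does, gives the per-function probability in $[p_0(1-(\delta-1)/2),\,p_0(1+(\delta-1)/2)]$ with no estimate of $p_0$ whatsoever; then $\log(1/\epsilon)=O(k\log l-\log(\delta-1))$ immediately. Second, your explicit choice $\epsilon=(\delta-1)p_0/(C\cdot 2^{k})$ with a fixed constant $C$ does not match your own atom count ``at most $2^{kt}$'': when $t>1$ the resulting error $2^{kt}\epsilon$ is $\gg p_0$. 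You would need $2^{kt}$ (not $2^{k}$) in the denominator---which still keeps $\log(1/\epsilon)=O(k\log l-\log(\delta-1))$, so your asymptotic claim survives, but the formula as written is off. Finally, for $l$ not a power of $2$ the paper simply post-composes with a fixed map $[2^t]\to[l]$; your ``discard the extra symbols'' variant also works but does change $p_0$, so the cancellation above is the cleaner route.
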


\begin{proof}
We use an explicit probability space of size $2^{O(k \log l -
\log(\delta-1))} \log n$ that supports $n \lceil \log_2 l \rceil$
random variables that are $(\epsilon,k \lceil \log_2 l
\rceil)$-independent where $\epsilon = 2^{-k \lceil \log_2 l
\rceil - 1} (\delta - 1)$. We attach $\lceil \log_2 l \rceil$
random variables to each element of $[n]$, thereby assigning it a
value from $[2^{\lceil \log_2 l \rceil}]$. In case $l$ is not a
power of $2$, all elements of $[2^{\lceil \log_2 l \rceil}]-[l]$
can be mapped to $[l]$ by some arbitrary fixed function. If
follows from the construction that there exists a constant $T > 0$
so that for every $S \in \binom{[n]}{k}$, the number of good
splits satisfies
$$
\frac{T}{\delta} \leq (1-\frac{\delta-1}{2}) T \leq split(S) \leq
(1+\frac{\delta-1}{2}) T \leq \delta T.
$$
\qed
\end{proof}

\begin{corollary}
For any fixed $c > 0$, a $(1+c^{-k})$-balanced $(n,k,2)$-splitter
of size $2^{O(k)} \log n$ can be constructed in time $2^{O(k)} n
\log n$.
\end{corollary}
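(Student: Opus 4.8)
The plan is to invoke Theorem~\ref{explicit3} with $l=2$ and with the right choice of $\delta$. We want a $(1+c^{-k})$-balanced $(n,k,2)$-splitter, so we set $\delta = 1 + c^{-k}$; note that since $c$ is a fixed positive constant and $k \geq 1$, for $k$ large enough we have $1 < \delta \leq 2$, which is the range required by Theorem~\ref{explicit3}. (For the finitely many small values of $k$ where this fails, one can take $\delta=2$, or simply absorb them into the $O(\cdot)$ notation, since for bounded $k$ the whole family has bounded size times $\log n$ anyway.)

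With these parameters, Theorem~\ref{explicit3} gives an explicit $\delta$-balanced $(n,k,2)$-splitter of size $2^{O(k \log 2 - \log(\delta-1))}\log n$ constructible in time $2^{O(k\log 2 - \log(\delta - 1))} n \log n$. The only thing to check is that the exponent $k \log 2 - \log(\delta - 1)$ is $O(k)$. Here $\log 2$ is an absolute constant, so $k\log 2 = O(k)$. For the second term, $\delta - 1 = c^{-k}$, hence $-\log(\delta - 1) = -\log(c^{-k}) = k\log c$, which is $O(k)$ because $c$ is fixed. (If $0 < c < 1$ this term is negative and only helps; if $c \geq 1$ it contributes the positive quantity $k\log c = O(k)$.) Therefore $k\log 2 - \log(\delta-1) = O(k)$, and both the size bound $2^{O(k)}\log n$ and the running-time bound $2^{O(k)} n \log n$ follow.

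There is essentially no obstacle here: the corollary is a direct specialization of Theorem~\ref{explicit3}, and the only (trivial) point requiring care is that substituting $\delta - 1 = c^{-k}$ turns the $-\log(\delta-1)$ term in the exponent into a linear-in-$k$ term with a constant depending only on the fixed parameter $c$, so that it gets swallowed by the $2^{O(k)}$ factor. One should also note in passing that the balancedness parameter degrades gracefully: $\delta \to 1$ exponentially fast in $k$, so the splitter becomes nearly perfectly balanced while its size stays $2^{O(k)}\log n$. \qed
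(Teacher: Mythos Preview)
Your proof is correct and is exactly the intended derivation: the paper states this corollary immediately after Theorem~\ref{explicit3} with no separate proof, and your specialization $l=2$, $\delta=1+c^{-k}$ is precisely what is meant. One small quibble: your remark that ``for $k$ large enough we have $1<\delta\le 2$'' holds only when $c>1$; for $0<c\le 1$ one has $\delta\ge 2$ for every $k$, but then any $2$-balanced splitter (obtained from Theorem~\ref{explicit3} with $\delta=2$) is automatically $(1+c^{-k})$-balanced, so the claim is trivial in that regime.
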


Setting $l=k$ in Theorem \ref{explicit3}, we get that a
$\delta$-balanced $(n,k)$-family of perfect hash functions of size
$2^{O(k \log k - \log(\delta-1))} \log n$ can be constructed in
time $2^{O(k \log k - \log(\delta-1))} n \log n$. Note that if $k$
is small enough with respect to $n$, say $k=O(\log n/\log \log n)$, then for any fixed
$1 < \delta \leq 2$, this already gives a family of functions of
size polynomial in $n$. We improve upon this last result in the
following Theorem, which is our main construction.

\begin{theorem}\label{explicit4}
For $1 < \delta \leq 2$, a $\delta$-balanced $(n,k)$-family of
perfect hash functions of size $\frac{2^{O(k \log \log
k)}}{(\delta-1)^{O(\log k)}} \log n$ can be constructed in time
$\frac{2^{O(k \log \log k)}}{(\delta-1)^{O(\log k)}} n \log n +
(\delta-1)^{-O(k / \log k)}$. In particular, for any fixed $1 <
\delta \leq 2$, the size is $2^{O(k \log \log k)} \log n$ and the
time is $2^{O(k \log \log k)} n \log n$.
\end{theorem}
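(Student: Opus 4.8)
The plan is to obtain the construction by composing three balanced splitters via Lemmas~\ref{composition1} and~\ref{composition2}, with the composition orchestrated so that the ``bad'' $e^k$ factor coming from Theorem~\ref{explicit1} is only ever paid on sets of size roughly $k/\log k$ rather than on sets of size $k$. First I would fix an integer $l = \Theta(\log k)$ and split $[k]$ into $l$ blocks, each of size $k_j = \lceil k/l\rceil$ or $\lfloor k/l\rfloor$; since the blocks have size $O(k/\log k)$, a $\gamma$-balanced $(l', k_j)$-family of perfect hash functions on a moderately-sized ground set $l'$ can be built by Theorem~\ref{explicit1} in time $\binom{l'}{k_j} e^{k_j} (k_j)^{O(1)} l' \log l' / (\gamma-1)^2$, and with $k_j = O(k/\log k)$ the factor $e^{k_j}$ is only $2^{O(k/\log k)}$ and $\binom{l'}{k_j}$ contributes the $(\delta-1)^{-O(k/\log k)}$ additive term in the running time once $l'$ is chosen as a suitable $\mathrm{poly}(k)/(\delta-1)^{O(1)}$.

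The three-layer composition I would use is: (1) an outer $\delta_1$-balanced $(n, k, q)$-splitter with $q = \lceil 2k^2/(\delta_1-1)\rceil$ from Theorem~\ref{explicit2}, of size $\mathrm{poly}(k)\log n/(\delta_1-1)^{O(1)}$, which brings the ground set down from $n$ to $q = \mathrm{poly}(k)$; since $q > k$ here one actually wants the $k<l$ direction, so in fact the outer layer is the $k<l$ splitter of Theorem~\ref{explicit2} feeding a $(q,k)$-family, via Lemma~\ref{composition1}. (2) On the ground set $[q]$ of polynomial size, a $\delta_2$-balanced $(q, k, l)$-splitter with $l = \Theta(\log k)$ from Theorem~\ref{explicit3}, whose size is $2^{O(k\log l - \log(\delta_2-1))}\log q = 2^{O(k\log\log k)}(\delta_2-1)^{-O(1)}\log q$ and whose construction time is of the same order times $q$. (3) For each of the $l$ blocks, a $\gamma_j$-balanced $(q, k_j)$-family of perfect hash functions from Theorem~\ref{explicit1} with $k_j = O(k/\log k)$; combine these with the layer-(2) splitter using Lemma~\ref{composition2}. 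Multiplying sizes: $\mathrm{poly}(k)\log n \cdot 2^{O(k\log\log k)} \cdot \prod_{j=1}^l 2^{O(k_j)} = 2^{O(k\log\log k)}(\delta-1)^{-O(\log k)}\log n$, since $\sum_j k_j = k$ so $\prod_j 2^{O(k_j)} = 2^{O(k)}$, which is absorbed, and each of the $O(\log k)$ sub-families contributes one factor of $(\gamma_j-1)^{-2}$, giving the $(\delta-1)^{-O(\log k)}$ dependence.

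To make the $\delta$-balance come out right I would set $\delta_1 = \delta_2 = \gamma_j = 1 + (\delta-1)/c$ for a suitable constant $c$ depending on the number of layers (here $c = 2 + l$ suffices, since by the composition lemmas the final imbalance is at most $\delta_1\delta_2\prod_j\gamma_j \le (1+(\delta-1)/c)^{c} \le 1 + O(\delta-1) \le \delta$ after rescaling $c$), so that $(\delta_i - 1)^{-1} = O((\delta-1)^{-1})$ throughout and the dependence on $\delta$ in each factor is only polynomial; raising it to the $O(\log k)$ layers gives exactly $(\delta-1)^{-O(\log k)}$ in the size and, for the layer-(3) building time, $(\delta-1)^{-O(k/\log k)}$ as the additive term (coming from the $\binom{l'}{k_j}$ factor with $l' = \mathrm{poly}(k)/(\delta-1)^{O(1)}$ and $k_j = O(k/\log k)$). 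Each layer's construction time is at most its size times $n\log n$ (layer 1) or $\mathrm{poly}(k)$ (layers 2--3 on the small ground set), and the composition in Lemmas~\ref{composition1}--\ref{composition2} is just function composition, so the total running time is $2^{O(k\log\log k)}(\delta-1)^{-O(\log k)} n\log n + (\delta-1)^{-O(k/\log k)}$ as claimed.

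The main obstacle is the bookkeeping of the error parameter $\delta$ across the nested compositions: one must verify that splitting $\delta-1$ into $O(\log k)$ equal pieces, one per composed layer, keeps every $(\delta_i-1)^{-1}$ within a constant factor of $(\delta-1)^{-1}$ while the product of the individual imbalances stays below $\delta$, and one must confirm that the $e^{k_j}$ and $\binom{l'}{k_j}$ factors from Theorem~\ref{explicit1}, summed/multiplied over the $l = \Theta(\log k)$ blocks, collapse to $2^{O(k)}$ and $(\delta-1)^{-O(k/\log k)}$ respectively rather than blowing up. Once the parameters $l = \Theta(\log k)$, $q = \mathrm{poly}(k)$, and $l' = \mathrm{poly}(k)/(\delta-1)^{O(1)}$ are pinned down, the rest is routine substitution into the three cited theorems and the two composition lemmas.
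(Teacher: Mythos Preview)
Your three-layer composition---an outer $(n,k,q)$-splitter from Theorem~\ref{explicit2}, a middle $(q,k,l)$-splitter from Theorem~\ref{explicit3} with $l=\Theta(\log k)$, and inner $(q,k_j)$-families from Theorem~\ref{explicit1}, glued by Lemmas~\ref{composition1} and~\ref{composition2}---is exactly the paper's construction, with the same parameter choices $q=\Theta(k^2/(\delta-1))$ and $k_j=\Theta(k/\log k)$. The only substantive difference is bookkeeping: the paper splits the error budget multiplicatively, taking $\delta'=\delta^{1/3}$ for the two outer layers and $\delta''=\delta^{1/(3l)}$ for each of the $l$ inner layers, whereas you split $\delta-1$ additively across $c=2+l$ layers.

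One slip to flag: with $c=2+l=\Theta(\log k)$ you do \emph{not} get $(\gamma_j-1)^{-1}=O((\delta-1)^{-1})$ as you write, but rather $(\gamma_j-1)^{-1}=\Theta((\log k)(\delta-1)^{-1})$; the paper faces the same issue and records it as $\delta''-1\ge(\delta-1)/(6l)$. This extra $\log k$ factor, raised to the $l$ inner layers, contributes only $(\log k)^{O(\log k)}=2^{O(\log k\log\log k)}$, which is absorbed into $2^{O(k\log\log k)}$, so your stated final bound is unaffected---but the intermediate claim should be corrected.
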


\begin{proof}
(sketch)
Denote $l=\lceil \log_2 k \rceil$ ,$\delta'=\delta^{1/3}$,
$\delta''=\delta^{1/(3l)}$, and $q=\lceil \frac{2 k^2}{\delta'-1}
\rceil$. Let $H$ be a $\delta'$-balanced $(q,k,l)$-splitter of
size $2^{O(k \log \log k)} (\delta' - 1)^{-O(1)}$ constructed
using Theorem \ref{explicit3}. For every $j$, $j=1, \ldots ,l$,
let $B_j$ be a $\delta''$-balanced $(q,k_j)$-family of perfect
hash functions of size $O(e^{k/\log k} k)(\delta''-1)^{-O(1)}$
constructed using Theorem \ref{explicit1}, where $k_j=\lceil k/l
\rceil$ for every $j \leq k \ mod \ l$ and $k_j=\lfloor k/l
\rfloor$ otherwise. Using Lemma \ref{composition2} for composing
$H$ and $\{B_j\}_{j=1}^l$, we get a $\delta'^2$-balanced
$(q,k)$-family $D'$ of perfect hash functions.

Now let $D''$ be a $\delta'$-balanced $(n,k,q)$-splitter of size
$k^{O(1)} (\delta'-1)^{-O(1)} \log n$ constructed using Theorem
\ref{explicit2}. Using Lemma \ref{composition1} for composing $D'$
and $D''$, we get a $\delta$-balanced $(n,k)$-family of perfect
hash functions, as needed. Note that for calculating the size of
each $B_j$, we use the fact that $e^{u/2} \leq 1+u \leq e^u$ for
all $0 \leq u \leq 1$, and get the following:
$$
\delta''-1 = (1+(\delta-1))^{\frac{1}{3l}}-1 \geq e^{\frac{\delta -
1}{6l}}-1 \geq \frac{\delta - 1}{6l}.
$$
The time needed to construct each $B_j$ is $2^{O(k)}
(\delta'-1)^{-O(k / \log k)}$. The $2^{O(k)}$ term is omitted in the
final result, as it is negligible in respect to the other terms.

\qed
\end{proof}

\section{Approximate Counting of Paths and Cycles}\label{sec:counting}

We now state what it means for an algorithm to approximate a
counting problem.

\begin{definition}
We say that an algorithms approximates a counting problem by a
multiplicative factor $\delta \geq 1$ if for every input $x$, the
output $ALG(x)$ of the algorithm satisfies $N(x)/\delta \leq ALG(x)
\leq \delta N(x)$, where $N(x)$ is the exact output of the counting
problem for input $x$.
\end{definition}

The technique of color-coding is used for approximate counting of
paths and cycles. Let $G=(V,E)$ be a directed or undirected graph.
In our algorithms we will use constructions of balanced
$(|V|,k)$-families of perfect hash functions. Each such function
defines a coloring of the vertices of the graph. A path is said to
be \textit{colorful} if each vertex on it is colored by a distinct
color. Our goal is to count the exact number of colorful paths in
each of these colorings.

\begin{theorem}\label{application1}
For any $1 < \delta \leq 2$, the number of simple (directed or
undirected) paths of length $k-1$ in a (directed or undirected)
graph $G=(V,E)$ can be approximated up to a multiplicative factor
of $\delta$ in time $\frac{2^{O(k \log \log
k)}}{(\delta-1)^{O(\log k)}} |E| \log |V| + (\delta-1)^{-O(k /
\log k)}$.
\end{theorem}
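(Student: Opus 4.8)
The plan is to combine the main explicit construction (Theorem~\ref{explicit4}) with the color-coding technique of \cite{JACM::AlonYZ1995}. First I would fix the target factor $\delta$ and set $\delta' = \delta^{1/2}$ (or some similar split), so that the product of two $\delta'$-type errors is at most $\delta$. Using Theorem~\ref{explicit4}, I would construct a $\delta'$-balanced $(|V|,k)$-family $\mathcal{F}$ of perfect hash functions of size $N = \frac{2^{O(k\log\log k)}}{(\delta-1)^{O(\log k)}}\log|V|$, in time $\frac{2^{O(k\log\log k)}}{(\delta-1)^{O(\log k)}} |V|\log|V| + (\delta-1)^{-O(k/\log k)}$. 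Each $f \in \mathcal{F}$ is a coloring of $V$ by $k$ colors; a simple path on $k$ vertices is \emph{colorful} under $f$ if its vertices receive $k$ distinct colors, i.e., $f$ is 1-1 on the vertex set.

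Next I would recall the dynamic-programming routine from \cite{JACM::AlonYZ1995}: given a fixed coloring by $k$ colors, the number of colorful paths of length $k-1$ (counted appropriately, with each path counted once or a fixed number of times depending on orientation conventions) can be computed in time $2^{O(k)}|E|$ by building, for each vertex $v$ and each color subset $C$, the count of colorful paths ending at $v$ that use exactly the colors in $C$. Summing this count over all colorings $f \in \mathcal{F}$ gives a quantity $\sum_{f} c_f(P)$ for each simple $k$-vertex path $P$, where $c_f(P) = 1$ exactly when $f$ is injective on $V(P)$ and $0$ otherwise (after dividing out the fixed overcount for orientation/endpoint symmetry). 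By the $\delta'$-balanced property, $\sum_{f} c_f(P) \in [T/\delta', \delta' T]$ for a single constant $T$ independent of $P$; hence dividing the total colorful-path count over all colorings by $T$ yields a value within a factor $\delta'$ of the true number of simple paths of length $k-1$.

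The only remaining issue is that the constant $T$ is not known exactly — the construction guarantees its existence but we need its value to normalize. Here I would compute $T$ approximately: since we know $T/\delta' \le inj(S) \le \delta' T$ for every $k$-set $S$, and $\sum_{f\in\mathcal F}\sum_{S\in\binom{[|V|]}{k}}[f\text{ 1-1 on }S] = \sum_{f\in\mathcal F}(\text{number of $k$-sets on which $f$ is injective})$, which is directly computable as $\sum_f$ (count of injective $k$-subsets, obtainable by an inclusion-exclusion / DP over color classes in time $2^{O(k)}|V|$), and equals $\sum_S inj(S) \in [\binom{|V|}{k}T/\delta', \binom{|V|}{k}\delta' T]$. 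Dividing by $\binom{|V|}{k}$ gives $T$ up to a factor $\delta'$. Combining the two $\delta'$ errors gives the claimed factor-$\delta$ approximation. Multiplying the two running times — $N$ colorings times $2^{O(k)}|E|$ per coloring — and absorbing the $2^{O(k)}$ factor into $2^{O(k\log\log k)}$ gives the stated bound $\frac{2^{O(k\log\log k)}}{(\delta-1)^{O(\log k)}}|E|\log|V| + (\delta-1)^{-O(k/\log k)}$.

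The main obstacle I anticipate is the bookkeeping around the unknown normalizing constant $T$: one must be careful that the error in estimating $T$ and the error from the balanced family compound multiplicatively in a controlled way (hence the $\delta^{1/2}$ split), and that the inclusion-exclusion used to count injective colorings — equivalently, to compute $T$ — fits within the same $2^{O(k)}|V|$ time budget so as not to dominate. The color-coding DP itself is standard and poses no real difficulty.
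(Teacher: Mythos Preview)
Your proposal is correct and the core structure---build a balanced family via Theorem~\ref{explicit4}, count colorful paths per coloring by the standard $2^{O(k)}|E|$ dynamic program, sum and normalize by $T$---matches the paper exactly. The one genuine difference is your treatment of the normalizing constant $T$. You regard $T$ as unknown, split the error budget as $\delta = (\delta')^2$, build a $\delta'$-balanced family, and then \emph{estimate} $T$ by averaging $inj(S)$ over all $k$-subsets (which, incidentally, is just $\prod_{i=1}^k |f^{-1}(i)|$ per coloring, so no inclusion--exclusion is needed). The paper instead observes that in all the explicit constructions of Section~\ref{sec:explicit}, the constant $T$ is computable directly: each component splitter in Theorem~\ref{explicit4} has an explicitly known $T$-value (e.g., $pM$ in Theorem~\ref{explicit1}), and these multiply under the composition lemmas. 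Hence the paper uses a $\delta$-balanced family from the start and divides by the exact $T$, with no error-splitting. Your workaround is perfectly valid and has the virtue of treating the balanced family as a black box, but it costs a factor of two in the error budget and some extra bookkeeping that the paper avoids.
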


\begin{proof}
(sketch)
We use the $\delta$-balanced $(|V|,k)$-family of perfect hash
functions constructed using Theorem \ref{explicit4}. Each function
of the family defines a coloring of the vertices in $k$ colors. We
know that there exists a constant $T>0$, so that for each set $S
\subseteq V$ of $k$ vertices, the number of functions that are 1-1
on $S$ is between $T/\delta$ and $\delta T$. The exact value of
$T$ can be easily calculated in all of our explicit constructions.

For each coloring, we use a dynamic programming approach in order
to calculate the exact number of colorful paths. We do this in $k$
phases. In the $i$th phase, for each vertex $v \in V$ and for each
subset $C \subseteq \{1, \ldots ,k\}$ of $i$ colors, we calculate
the number of colorful paths of length $i-1$ that end at $v$ and use
the colors of $C$. To do so, for every edge $(u,v) \in E$, we
check whether it can be the last edge of a colorful path of length
$i-1$ ending at either $u$ or $v$. Its contribution to the number of
paths of length $i-1$ is calculated using our knowledge on the
number of paths of length $i-2$. The initialization of phase $1$
is easy and after performing phase $k$ we know the exact number
of paths of length $k-1$ that end at each vertex $v \in V$. The time
to process each coloring is therefore $2^{O(k)} |E|$.

We sum the results over all colorings and all ending vertices $v
\in V$. The result is divided by $T$. In case the graph is
undirected ,we further divide by $2$. This is guaranteed to be the
needed approximation.
\qed
\end{proof}

\begin{theorem}\label{application2}
For any $1 < \delta \leq 2$, the number of simple (directed or
undirected) cycles of size $k$ in a (directed or undirected) graph
$G=(V,E)$ can be approximated up to a multiplicative factor of
$\delta$ in time $\frac{2^{O(k \log \log k)}}{(\delta-1)^{O(\log
k)}} |E| |V| \log |V| + (\delta-1)^{-O(k / \log k)}$.
\end{theorem}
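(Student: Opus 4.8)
The plan is to reduce cycle counting to colorful-path counting, in close analogy with Theorem~\ref{application1}, but paying the extra price of fixing an anchor vertex. First I would use the $\delta$-balanced $(|V|,k)$-family of perfect hash functions from Theorem~\ref{explicit4}, so that for some explicitly computable constant $T>0$, every $k$-subset $S\subseteq V$ is split bijectively (i.e.\ some function is $1$-$1$ on $S$) by a number of functions between $T/\delta$ and $\delta T$. Each function gives a coloring of $V$ in $k$ colors, and a simple cycle of size $k$ whose vertex set is $S$ is counted in a coloring exactly when that coloring is injective on $S$; so summing over colorings counts each such cycle between $T/\delta$ and $\delta T$ times. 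Dividing the total by $T$ (and, in the undirected case, by an appropriate further factor accounting for the $2k$ or $k$ ways a cycle is traversed from a fixed start) yields the required $\delta$-approximation.

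The remaining task is, for each fixed coloring, to count colorful cycles of size $k$ exactly. Here I would iterate over a choice of ``start'' vertex $w\in V$: for each $w$, run the dynamic programming of Theorem~\ref{application1} to compute, for every vertex $v$ and every color subset $C\subseteq\{1,\dots,k\}$, the number of colorful paths of length $k-1$ that start at $w$, end at $v$, and use exactly the colors of $C$; then for each edge $(v,w)\in E$ with $C=\{1,\dots,k\}$ add that count to close the cycle. This is the standard color-coding trick for cycles: a colorful cycle on $S$ with $|S|=k$ is detected, and anchoring at one of its vertices while remembering the used color set prevents double counting within the DP. The per-vertex DP costs $2^{O(k)}|E|$ (the same bound as in the path case, since the color-subset table has $2^k$ entries per vertex), so running it from all $|V|$ choices of $w$ gives $2^{O(k)}|E||V|$ work per coloring, and multiplying by the number $2^{O(k\log\log k)}(\delta-1)^{-O(\log k)}\log|V|$ of colorings yields the stated time bound $\frac{2^{O(k\log\log k)}}{(\delta-1)^{O(\log k)}}|E||V|\log|V|+(\delta-1)^{-O(k/\log k)}$, the additive term coming from the construction cost in Theorem~\ref{explicit4}.

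The main subtlety is the bookkeeping of multiplicities: a single simple $k$-cycle, once its coloring is injective on its vertex set, should contribute a fixed, coloring-independent number of times to the raw sum so that one global division by a constant recovers the count. Anchoring at every vertex $w$ multiplies each cycle by $k$ (each of its $k$ vertices serves as start once), and in the undirected case each cycle is moreover traversed in two directions, giving a factor $2k$; one must also be careful that the DP started at $w$ does not itself revisit $w$ (enforced automatically, since $w$'s color is always in $C$ after step $1$). So the final normalization divides the summed-over-colorings total by $kT$ in the directed case and by $2kT$ in the undirected case. Once this constant is pinned down correctly, the $\delta$-approximation guarantee is immediate from the balancedness of the hash family, exactly as in the proof of Theorem~\ref{application1}. \qed
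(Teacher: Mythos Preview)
Your proposal is correct and follows essentially the same approach as the paper: build the $\delta$-balanced family from Theorem~\ref{explicit4}, for each coloring and each anchor vertex $w$ run the path-counting DP of Theorem~\ref{application1} to count colorful $(k{-}1)$-paths from $w$, close cycles via edges back to $w$, and normalize by $kT$ (and an extra factor $2$ in the undirected case). Your discussion of the multiplicity bookkeeping is in fact more explicit than the paper's sketch, but the argument is the same.
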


\begin{proof}
(sketch)
We use the $\delta$-balanced $(|V|,k)$-family of perfect hash
functions constructed using Theorem \ref{explicit4}. For every set
$S$ of $k$ vertices, the number of functions that are 1-1 on $S$
is between $T/\delta$ and $\delta T$. Every function defines a
coloring and for each such coloring we proceed as follows. For
every vertex $s \in V$ we run the algorithm described in the proof
of Theorem \ref{application1} in order to calculate for each
vertex $v \in V$ the exact number of colorful paths of length
$k-1$ from $s$ to $v$. In case there is an edge $(v,s)$ that
completes a cycle, we add the result to our count.

We sum the results over all the colorings and all pairs of
vertices $s$ and $v$ as described above. The result is divided by
$kT$. In case the graph is undirected, we further divide by $2$.
The needed approximation is achieved.
\qed
\end{proof}

\begin{corollary}
For any constant $c>0$, there is a deterministic polynomial time
algorithm for approximating both the number of simple paths of
length $k$ and the number of simple cycles of size $k$ for every
$k \leq O(\frac{\log n}{\log \log \log n})$ in a graph with $n$
vertices, where the approximation is up to a multiplicative factor
of $1+(\ln \ln n)^{-c \ln \ln n}$.
\end{corollary}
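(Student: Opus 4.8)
The plan is to invoke Theorems~\ref{application1} and~\ref{application2} directly, with the parameter choice $\delta = 1 + (\ln\ln n)^{-c\ln\ln n}$ (and, for paths of length exactly $k$, applying Theorem~\ref{application1} with $k+1$ in place of $k$, which changes nothing asymptotically). First I would check that this is a legitimate choice: for all sufficiently large $n$ we have $(\ln\ln n)^{-c\ln\ln n}\le 1$, hence $1<\delta\le 2$, and this $\delta$ is exactly the multiplicative approximation factor promised in the statement; the finitely many remaining $n$ are handled trivially. With $\delta$ so fixed, the two theorems already supply deterministic algorithms with the correct approximation guarantee (the value of $T$ being computable, as noted there), so it only remains to check that both running-time bounds are polynomial in $n$ when $k\le a\log n/\log\log\log n$ for a fixed constant $a$.

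The verification is a term-by-term estimate, switching freely between $\ln$ and $\log_2$ since that only affects hidden constants. Write $\beta = \ln\!\big(1/(\delta-1)\big) = c\,\ln\ln n\cdot\ln\ln\ln n$, so that $(\delta-1)^{-t} = 2^{O(t\beta)}$. The dominant combinatorial factor is $2^{O(k\log\log k)}$: since $k=O(\log n/\log\log\log n)$ we get $\log\log k = O(\log\log\log n)$, hence $k\log\log k = O(\log n)$ and $2^{O(k\log\log k)} = n^{O(1)}$. This is exactly the point where the bound $k\le O(\log n/\log\log\log n)$ is used. The factor $(\delta-1)^{-O(\log k)}$ equals $2^{O(\log k\cdot\beta)} = 2^{O(\log\log n\cdot\ln\ln n\cdot\ln\ln\ln n)} = 2^{O((\ln\ln n)^2\ln\ln\ln n)} = n^{o(1)}$.

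Finally, the additive term $(\delta-1)^{-O(k/\log k)}$: here $k/\log k = O\!\big(\log n/(\log\log n\,\log\log\log n)\big)$, so its exponent is $O\!\left(\frac{\log n}{\log\log n\,\log\log\log n}\cdot \ln\ln n\,\ln\ln\ln n\right) = O(\log n)$, and this term is again $n^{O(1)}$. Multiplying by the remaining polynomial factors $|E|\log|V|$ (paths) or $|E||V|\log|V|$ (cycles) leaves everything polynomial in $n$, which proves the corollary. I do not expect a real obstacle here, since the substance lives in the earlier theorems; the only care needed is bookkeeping the three scales $\log n$, $\log\log n$, $\log\log\log n$ and confirming that it is precisely the balance $k\log\log k = O(\log n)$ that pins down the stated range of $k$.
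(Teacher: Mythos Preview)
Your proposal is correct and is precisely the argument the paper implicitly intends: the corollary is stated without proof, as it follows by substituting $\delta = 1 + (\ln\ln n)^{-c\ln\ln n}$ into Theorems~\ref{application1} and~\ref{application2} and verifying term by term that the running time stays polynomial for $k \le O(\log n/\log\log\log n)$. Your bookkeeping of the three terms---in particular the identification of $k\log\log k = O(\log n)$ as the binding constraint and the cancellation in the additive $(\delta-1)^{-O(k/\log k)}$ term---is accurate.
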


\section{Concluding Remarks}\label{sec:conclude}

\begin{itemize}
\item An interesting open problem is whether for every fixed
$\delta>1$, there exists an explicit $\delta$-balanced
$(n,k)$-family of perfect hash functions of size $2^{O(k)} \log
n$. The key ingredient needed is an improved construction of
balanced $(n,k,2)$-splitters. Such splitters can be applied
successively to get the balanced $(n,k,\lceil \log_2 k
\rceil)$-splitter needed in Theorem \ref{explicit4}. It seems that
the constructions presented in \cite{journals/rsa/AlonGHP92} could
be good candidates for balanced $(n,k,2)$-splitters, although the
Fourier analysis in this case (along the lines of
\cite{journals/combinatorica/AzarMN98}) seems elusive.

\item Other algorithms from \cite{JACM::AlonYZ1995} can be
generalized to deal with counting problems. In particular it is
possible to combine our approach here with the ideas of
\cite{JACM::AlonYZ1995} based on fast matrix multiplication in
order to approximate the number of cycles of a given length. Given
a forest $F$ on $k$ vertices, the number of subgraphs of $G$
isomorphic to $F$ can be approximated using a recursive algorithm
similar to the one in \cite{JACM::AlonYZ1995}. For a weighted
graph, we can approximate, for example, both the number of minimum
(maximum) weight paths of length $k-1$ and the number of minimum
(maximum) weight cycles of size $k$. Finally, all the results can be readily extended
from paths and cycles to arbitrary small subgraphs of bounded tree-width.
We omit the details.

\item
In the definition of a balanced $(n,k)$-family of perfect hash
functions, there is some constant $T>0$, such that for every $S
\subseteq [n]$, $|S|=k$, the number of functions that are 1-1 on $S$
is close to $T$. We note that the value of $T$ need not be equal to
the expected number of 1-1 functions on a set of size $k$, for the case
that the functions were chosen independently according to a uniform
distribution. For example, the value of $T$ in the construction of
Theorem \ref{explicit4} is not even asymptotically equal to what one
would expect in a uniform distribution.
\end{itemize}

%\begin{thebibliography}{10}
%\bibliographystyle{splncs}
%\bibliographystyle{plain}
%\bibliography{shai}
%\end{thebibliography}

\end{document}